\theoremstyle{definition}
\newtheorem{theorem}{Theorem}[section]
\newtheorem{proposition}[theorem]{Proposition}
\newtheorem{lemma}[theorem]{Lemma}
\newtheorem{corollary}[theorem]{Corollary}
\newtheorem{definition}[theorem]{Definition}
\newtheorem{example}[theorem]{Example}
\newtheorem{remark}[theorem]{Remark}
\newtheorem{claim}[theorem]{Claim}
\newtheorem{step}{Step}
\def\n{\boldsymbol{n}}
\def\e{\boldsymbol{e}}
\def\Z{\mathbb{Z}}
\def\b0{\boldsymbol{0}}
\def\mR{\mathcal{R}}
\def\m{\boldsymbol{m}}
\def\Ord{\mbox{Ord}}
\def\of{\overline{f}}
\def\oF{\overline{F}}
\begin{document}

\title{Coprimeness-preserving discrete KdV type equation
on an arbitrary dimensional lattice}
\author{R. Kamiya$^1$, M. Kanki$^2$, T. Mase$^1$, T. Tokihiro$^1$\\
\small $^2$ Department of Mathematics, Faculty of Engineering Science,\\
\small Kansai University, 3-3-35 Yamate, Osaka 564-8680, Japan\\
\small $^1$ Graduate School of Mathematical Sciences,\\
\small the University of Tokyo, 3-8-1 Komaba, Tokyo 153-8914, Japan
}



\begin{abstract}

We introduce an equation defined on a multi-dimensional lattice, which
can be considered as an extension to the coprimeness-preserving discrete KdV like equation in our previous paper.
The equation is also interpreted as a higher-dimensional analogue of the Hietarinta-Viallet equation, which is famous for its singularity confining property while having an exponential degree growth.
As the main theorem we prove the Laurent and the irreducibility properties
of the equation in its ``tau-function'' form.
From the theorem the coprimeness of the equation follows.
In Appendix we review the coprimeness-preserving discrete KdV like equation which
is a base equation for our main system and prove the properties such as the coprimeness.

\end{abstract}


\begingroup
\def\uppercasenonmath#1{} 
\let\MakeUppercase\relax 

\maketitle

\endgroup


\section{Introduction}

One of the first integrability tests is the singularity confinement test \cite{sc}, which
was proposed as a discrete analogue of the Painlev\'{e} test for ordinary differential equations.
The singularity confinement approach observes whether the spontaneously appearing
singularities of a discrete system disappear after a finite number of iteration steps by cancelling out.
The test is quite useful since it can easily be applied to various discrete equations.

Later, however, an example of nonintegrable discrete equations with a confined singularities has been found by Hietarinta and Viallet:
\begin{equation}
	y_{m} = y_{m-1} + \frac{a}{y^2_{m-1}} - y_{m-2}, \label{hveq_original}
\end{equation}
which is now called the Hietarinta-Viallet equation \cite{hv}.
Thus the singularity confinement test is not sufficient in its original form as an
integrability criterion.
Let us remark that there are everal successful attempts to refine the test to eliminate the insufficiency above \cite{redemption,redeeming,rod1,rod2}.

On the other hand, there is another approach to the discrete integrability
by utilizing the complexity of a given system.  
One of the first attempts in this direction is due to Arnold \cite{Arnold}, in which growth rates of the geometric properties related to the topological complexity of a diffeomorphism were estimated.
Then Veselov realized that a polynomial growth of a certain quantity such as 
the number of intersection points is closely related to the integrability of a given system \cite{Veselov}, and Falqui and Viallet observed similar relations between a polynomial growth and the existence of an invariant in the case of birational transformations of projective spaces \cite{Falqui}.
Then the notion of the algebraic entropy is presented \cite{entropy} in this stream
of complexity measurements of the iterated mappings.
The algebraic entropy $\mathcal{E}$ of a given discrete system $\{y_n\}$ is defined by
the following limit
\[
	\mathcal{E}:=\lim_{n \to +\infty} \frac{1}{n} \log (\deg y_n),
\]
which is nonnegative.
The zero algebraic entropy criterion asserts that an equation is integrable if and only if its algebraic entropy is zero.
In other words, the algebraic entropy has an information on the speed of the degree growth of $y_n$ as a rational function of the initial variables, and judges the integrability by a subexponential growth.
The algebraic entropy of the Hietarinta-Viallet equation \eqref{hveq_original} is
proved to be positive \cite{hv,takenawa}.

The empirical studies show that the algebraic entropy criterion is
quite accurate, and thus it seems natural to `define' the discrete integrability
by the zero entropy.
It should be noted that, when we study a system over a multi-dimensional lattice,
the algebraic entropy is not defined in its original form
(though attempts have been made in \cite{Viallet2006}).
In this article, a system is considered to be integrable if its iterates
have a subexponential growth in terms of its initial variables in a suitable initial configuration.
Although we use a standard type boundary condition in this paper, it is worth noting 
that the degree growth heavily depends on the initial configurations \cite{HMW}.

An equation satisfies the Laurent phenomenon (or the Laurent property) if every iterate of the equation
can be written as a Laurent polynomial of the initial variables.
Using the theory of cluster algebras, Fomin and Zelevinsky have proved that many discrete equation have this property \cite{FZ,FZ2}.
Recently, a lot of results on the relation between the cluster algebras and the integrable systems have been found: e.g.,  \cite{FH2014} by Hone et.\ al.\, \cite{Okubo2015} by Okubo, and \cite{rims,investigation} by Mase.

Though the Laurent property and the discrete integrability seem to be closely related to each other, they are not equivalent at all.
Thus, the coprimeness property, the basic idea of which is
similar to that of singularity confinement, was proposed \cite{coprimekdv, coprime, coprimetoda}.
The coprimeness property is also motivated by a transformation of dependent variables
of several discrete equations into Laurent systems (e.g., the transformation between the tau-function form and the nonlinear form of the discrete KdV equation).
The coprimeness property focuses on the cancellation of factors when the iterates
of an equation is written as rational functions of the initial variables.
An equation is said to have the coprimeness property if any factor except for monomials emerges only on a finite number of
iterates.
Recently the transformation of variables related to the coprimeness property of a given equation is utilized to obtain the algebraic entropy of the equation,
and the algebraic entropies of the Hietarinta-Viallet equation and some of its extensions are calculated \cite{extendedhv}.
By the transformation above, we obtain a tau-function type representation of the given confining non-integrable nonlinear system.
It should be noted that such a representation for the Hietarinta-Viallet equation has first been obtained by Hone in \cite{Hone2007}. Hamad et. al. named the method 
of obtaining a Laurent system from the nonlinear discrete equation by such a transformation the ``Laurentification'' \cite{Hamad}.

In this article we focus on a class of discrete equations, which pass the singularity confinement test while they are nonintegrable in terms of their exponential degree growth.
A lot of discrete equations of this type other than the Hietarinta-Viallet equation
have been found out \cite{bedfordkim, redemption, redeeming}, although they are all defined
on a one-dimensional lattice.
An example of such confining nonintegrable equations over multi-dimensional lattices had not been known for a long time.
However,
the following example was found recently \cite{qintegrable}:
\begin{equation}\label{eq:nonlinear}
	x_{t, n} = - x_{t-1, n-1} + \frac{a}{x^k_{t, n-1}} + \frac{b}{x^k_{t-1, n}}.
\end{equation}
Here $k$ is an even integer greater than one, and $a, b$ are nonzero parameters.
The equation has an evolution on the first quadrant of the $(t, n)$-plane.
Let us recall that the equation (\ref{eq:nonlinear}) has the following singularity pattern:
\begin{equation}
\begin{array}{|c||c|c|c|c}
\vdots & \vdots & \vdots & \vdots & \\ \hline
\text{init} & \text{REG} & \text{REG} & \text{REG} & \cdots \\ \hline
\text{init} & \infty^{k} & 0^1 & \text{REG} & \cdots \\ \hline
\text{init} & 0^1 & \infty^{k} & \text{REG} & \cdots \\ \hline \hline
 & \text{init} & \text{init} & \text{init} & \cdots \\ \hline
\end{array} \label{table:singpattern}
\ ,
\end{equation}
where ``$\text{init}$'' denotes a generic initial value, and ``$\text{REG}$'' a finite value depending on the initial values.
Note that even though Equation \eqref{eq:nonlinear} coincides with Hirota's discrete KdV equation \cite{Hirota1977} when $k=1$, the singularity pattern of \eqref{eq:nonlinear} for $k=1$ is different from the one in \eqref{table:singpattern}. 

The transformation between the dependent variables inferred from the above singularity pattern is as follows:
\begin{equation}
	x_{t,n} = \frac{f_{t,n} f_{t-1,n-1}}{f^k_{t-1,n} f^k_{t,n-1}}. \label{eq:xtof}
\end{equation}
We obtain the following equation by applying the change of variables \eqref{eq:xtof}
to $x_{t,n}$ of Equation (\ref{eq:nonlinear}):
\begin{equation}\label{eq:laurent}
	f_{t,n} = \frac{ - f_{t-2,n-2} f^k_{t-1,n} f^k_{t,n-1}
	+ a f^{k^2-1}_{t-1,n-1} f^{k^2}_{t,n-2} f^k_{t-1,n} f^k_{t-2,n-1}
	+ b f^{k^2-1}_{t-1,n-1} f^{k^2}_{t-2,n} f^k_{t,n-1} f^k_{t-1,n-2}
}{f^k_{t-2,n-1}f^k_{t-1,n-2}}.
\end{equation}
The Laurent and the irreducibility properties of Equation \eqref{eq:laurent}
are already known (Theorem \ref{thm:laurent}),  and we can utilize this result
to prove the coprimeness property of Equation (\ref{eq:nonlinear}) (Theorem \ref{thm:nonlinearcoprimeness}).
These results were stated in \cite{qintegrable}.

The purpose of this paper is to introduce  extensions of equations \eqref{eq:nonlinear} and \eqref{eq:laurent}
to ones defined on a higher dimensional lattice,
and to prove the properties such as the coprimeness.
In section \ref{sec:main} we introduce our main equation \eqref{pedKdV_eq}
and prove its coprimeness property.
The key idea is to transform \eqref{pedKdV_eq}
into a ``tau-function'' form \eqref{poly-form} and prove the irreducibility property of \eqref{poly-form}.
The third section is devoted to concluding remarks.
Finally, the proof of the coprimeness property of Equation \eqref{eq:nonlinear} is
elaborated on in the Appendix section since it was published only as an unrefereed report in Japanese \cite{RIMSJPN}.


\section{Multi-dimensional coprimeness-preserving equation} \label{sec:main}

\subsection{Introducing Equation \eqref{pedKdV_eq}}
A coprimeness-preserving generalization to the discrete KdV equation \eqref{eq:nonlinear} is contained as a special case in
a coprimeness-preserving equation on a higher dimensional lattice.
We introduce one of the higher dimensional lattice systems \eqref{pedKdV_eq} and prove its coprimeness property.
Moreover the Laurent property of an arbitrary reduction of the generalized ``tau-function'' form
of \eqref{pedKdV_eq} (which will be introduced later as \eqref{poly-form}) is proved using a discussion on the boundary conditions. 
Let us remark here that the irreducibility is not preserved under a reduction in general and therefore needs to be proved independently.

The main equation we introduce is the following recurrence relation:
\begin{equation}
x_{t+1,\n}+x_{t-1,\n}=\sum_{i=1}^d \left(\frac{a_i}{x_{t,\n+\e_i}^{k_i}}+\frac{b_i}{x_{t,\n-\e_i}^{l_i}}\right)
\qquad (k_i, l_i \in 2\Z_+).
\label{pedKdV_eq}
\end{equation}
Here \eqref{pedKdV_eq} is defined on the $(d+1)$-dimensional integer lattice:
$(t,\n)=(t,n_1,\cdots,n_d) \in \mathbb{Z}\times \mathbb{Z}^d=\mathbb{Z}^{d+1}$.
Each $\e_i\in\mathbb{Z}^d$ ($i=1,2,...,d$) is a unit vector: e.g., $\e_1=(1,0,\cdots,0)$, $\e_2=(0,1,0,\cdots,0)$, and $a_i, b_i$ are parameters.
We consider $t$ as the ``time'' variable and $\n=\sum_{i=1}^d n_i\e_i$ as the ``space'' variable, and the time evolution of the system \eqref{pedKdV_eq} is uniquely defined as follows:
let $I:=\{x_{-2,\n},x_{-1,\n}\}$ $(\n\in\mathbb{Z}^d)$ be the set of initial variables, then each $x_{t,\m}$ for  $t\ge 0$ and $\m\in\mathbb{Z}^d$ is uniquely calculated from the elements in $I$.
Note that in this setting  \eqref{pedKdV_eq} is invertible with respect to the time evolution: i.e., \eqref{pedKdV_eq} can be rationally solved in the opposite direction.
Also note that the evolution is only studied in the half of the whole integer lattice: i.e.,
we only study the points where the parity of $\left(t+\sum_{i=1}^d n_i\right)$ coincides with each other.
Equation \eqref{pedKdV_eq} is also considered as a generalization of the Hietarinta-Viallet equation \eqref{hveq_original} to a multi-dimensional lattice \cite{qintegrable}.

Let us investigate the singularity pattern of \eqref{pedKdV_eq}.
\begin{proposition} \label{prop:singpattern}
Let us assume the condition
\begin{equation}\label{condition}
\min_{1\le i\le d}[l_ik_i-1]>\max_{1\le i\le d}[l_i,\, k_i].
\end{equation}
Then the equation \eqref{pedKdV_eq} passes the singularity confinement test with the following pattern:
\begin{align*}
x_{0,\boldsymbol{0}}&=0^1,\\
x_{1,\e_i}&=\infty^{l_i},\ x_{1,-\e_i}=\infty^{k_i},\\
x_{2,\pm \e_i\pm \e_j}&=\mbox{REG} \ (i\neq j),\ x_{2, \pm 2\e_i}=\mbox{REG},\\
x_{2,\boldsymbol{0}}&=0^1,\\
x_{3,\pm \e_i}&=\mbox{REG},\\
x_{4,\b0}&=\mbox{REG},
\end{align*}
where REG denotes some regular value: i.e., a finite value depending on initial variables other than $x_{0,\boldsymbol{0}}$.
\end{proposition}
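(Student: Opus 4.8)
The plan is to carry out a direct singularity confinement computation by perturbing the lattice: I introduce a small parameter $\epsilon$, prescribe generic values independent of $\epsilon$ on the two consecutive slices $t=-1,0$ except for the single special value $x_{0,\b0}=\epsilon$, and then propagate the recurrence \eqref{pedKdV_eq} slice by slice, expanding every iterate as a Laurent series in $\epsilon$. Throughout I read $0^m$ as a quantity of order $\epsilon^{m}$ and $\infty^{m}$ as one of order $\epsilon^{-m}$, and I use the parity restriction of the sublattice together with the nearest-neighbour locality of \eqref{pedKdV_eq} to decide which lattice points actually carry the singularity at each step.

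First I would treat the slice $t=1$. Writing $x_{1,\n}=-x_{-1,\n}+\sum_i\bigl(a_i x_{0,\n+\e_i}^{-k_i}+b_i x_{0,\n-\e_i}^{-l_i}\bigr)$, the only singular denominator arises when $\n-\e_i=\b0$ or $\n+\e_i=\b0$, so $x_{1,\e_i}=b_i\epsilon^{-l_i}+O(1)=\infty^{l_i}$ and $x_{1,-\e_i}=a_i\epsilon^{-k_i}+O(1)=\infty^{k_i}$, while every other $x_{1,\n}$ stays finite. On the slice $t=2$ the value $x_{2,\b0}=-x_{0,\b0}+\sum_i\bigl(a_i x_{1,\e_i}^{-k_i}+b_i x_{1,-\e_i}^{-l_i}\bigr)$ is dominated by the linear term $-x_{0,\b0}=-\epsilon$, since each reciprocal enters at order $\epsilon^{l_ik_i}$; as $k_i,l_i\ge 2$ this gives $x_{2,\b0}=-\epsilon+O\!\left(\epsilon^{\min_i l_ik_i}\right)=0^1$. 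The same locality argument shows that $x_{2,\pm\e_i\pm\e_j}$ $(i\neq j)$ and $x_{2,\pm 2\e_i}$ receive only contributions of strictly positive order in $\epsilon$ and are therefore $\mbox{REG}$. I would record the refined expansion $x_{2,\b0}=-\epsilon\bigl(1+O(\epsilon^{\,\min_i l_ik_i-1})\bigr)$ for use in the next step.

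The heart of the argument is the slice $t=3$, where the singularity must cancel. Consider $x_{3,\e_j}=-x_{1,\e_j}+\sum_i\bigl(a_i x_{2,\e_j+\e_i}^{-k_i}+b_i x_{2,\e_j-\e_i}^{-l_i}\bigr)$; only the $i=j$ term of the second sum is singular, through $x_{2,\b0}^{-l_j}$, and because $l_j$ is even one gets $b_j x_{2,\b0}^{-l_j}=b_j\epsilon^{-l_j}\bigl(1+O(\epsilon^{\,\min_i l_ik_i-1})\bigr)$. Its leading term $b_j\epsilon^{-l_j}$ exactly cancels the leading term $-b_j\epsilon^{-l_j}$ of $-x_{1,\e_j}$, so the residual singular contribution is of order $\epsilon^{\,\min_i l_ik_i-1-l_j}$. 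This is precisely where condition \eqref{condition} enters: it forces $\min_i(l_ik_i)-1>l_j$, hence $\min_i(l_ik_i)-1-l_j>0$, so the residue vanishes and $x_{3,\e_j}=\mbox{REG}$; the symmetric computation for $x_{3,-\e_j}$ uses $a_j x_{2,\b0}^{-k_j}$ together with $\min_i(l_ik_i)-1>k_j$. Finally, on the slice $t=4$ the origin is the only point that can still feel a singularity, and there $x_{4,\b0}=-x_{2,\b0}+\sum_i\bigl(a_i x_{3,\e_i}^{-k_i}+b_i x_{3,-\e_i}^{-l_i}\bigr)$ has $-x_{2,\b0}=O(\epsilon)$ and generically nonzero finite denominators $x_{3,\pm\e_i}$, so $x_{4,\b0}=\mbox{REG}$.

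To close, I would verify from the locality of \eqref{pedKdV_eq} and the parity constraint that the points listed above are the only ones bearing a singularity at each time step, and that from $t=4$ onward the evolution is everywhere regular, so the singularity is confined exactly as stated. \textbf{The main obstacle} is the order-bookkeeping at $t=3$: one must expand $x_{2,\b0}$ to sufficient precision, push it through the $l_j$-th (resp.\ $k_j$-th) power in the denominator, and check that condition \eqref{condition} is exactly what forces the surviving correction to have positive order, with the evenness of $k_i,l_i$ being essential so that the leading singular terms cancel rather than reinforce.
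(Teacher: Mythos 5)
Your proposal is correct and follows essentially the same route as the paper: perturb with $x_{0,\boldsymbol{0}}=\epsilon$, propagate slice by slice, use the evenness of $l_i,k_i$ so that the leading terms $\mp b_i\epsilon^{-l_i}$ (resp.\ $\mp a_i\epsilon^{-k_i}$) cancel at $t=3$, and invoke condition \eqref{condition} to make the surviving correction of order $\epsilon^{\min_i[l_ik_i]-1-l_i}>0$. The only cosmetic difference is that you make explicit the locality/parity check that the listed points are the only singular ones, which the paper leaves implicit.
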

\begin{proof}
Let us take $x_{0,\boldsymbol{0}}=\varepsilon$ and calculate the time evolution.
Direct calculations show that
\begin{align*}
x_{1,\e_i}&=-x_{-1,\e_i}+\sum_{r=1}^d \left(\frac{a_r}{x_{0,\e_i+\e_r}^{k_r}}+\frac{b_r}{x_{0,\e_i-\e_r}^{l_r}}\right)=b_i \varepsilon^{-l_i}+O(1),\\
x_{1,-\e_i}&=a_i \varepsilon^{-k_i}+O(1),\\
x_{2,\pm \e_i \pm \e_j}&= -x_{0,\pm \e_i\pm \e_j}+\sum_{r=1}^d \left(\frac{a_r}{x_{1,\pm \e_i\pm \e_j+\e_r}^{k_r}}+\frac{b_r}{x_{1,\pm \e_i\pm \e_j-\e_r}^{l_r}}\right)=O(1)\ (i\neq j),\\
x_{2,\pm 2\e_i}&= -x_{0,\pm 2\e_i}+\sum_{r=1}^d \left(\frac{a_r}{x_{1,\pm 2\e_i + \e_r}^{k_r}}+\frac{b_r}{x_{1,\pm 2\e_i-\e_r}^{l_r}}\right)=O(1),\\
x_{2,\b0}&=-x_{0,\b0}+\sum_{r=1}^d \left(\frac{a_r}{x_{1,\e_r}^{k_r}}+\frac{b_r}{x_{1,-\e_r}^{l_r}}\right)=-\varepsilon+\sum_{r=1}^d \left(\frac{a_r}{(b_r \varepsilon^{-l_r}+O(1))^{k_r}}+\frac{b_r}{(a_r \varepsilon^{-k_r}+O(1))^{l_r}}\right)\\
&=-\varepsilon+\sum_{r=1}^d \varepsilon^{k_r l_r} \left( a_r b_r^{-k_r} +b_r a_r^{-l_r} + O(\varepsilon^{\min[k_r, l_r]}) \right)=-\varepsilon+O\left( \varepsilon^{\min [k_r l_r]_{1\le r\le d}} \right),\\
&=-\varepsilon+O(\varepsilon^M),
\end{align*}
where $M:=\min_{1\le r\le d}[k_r l_r]$.
In the next step we observe the essential cancellations as follows:
\begin{align}
x_{3,\e_i}&=-x_{1,\e_i}+\sum_{r=1}^d \left(\frac{a_r}{x_{2,\e_i+\e_r}^{k_r}}+\frac{b_r}{x_{2,\e_i-\e_r}^{l_r}}\right) \notag\\
&=-b_i \varepsilon^{-l_i}+\frac{b_i}{\left( -\varepsilon + O(\varepsilon^{M}) \right)^{l_i}}+O(1) \notag\\
&=-b_i \varepsilon^{-l_i}+(-1)^{l_i} b_i \frac{1}{\varepsilon^{l_i} \left( 1+O(\varepsilon^{M-1}) \right)^{l_i}}+O(1) \label{eq:liparity} \\
&=O(\varepsilon^{M-l_i-1})+O(1) \label{eq:Mcondition} \\
&=O(1).\notag 
\end{align}
Here we have used the parity of $l_i$ in \eqref{eq:liparity} and the condition \eqref{condition} in \eqref{eq:Mcondition}, respectively.
Similarly we have
\[
x_{3,-\e_i}=O(1).
\]
It is clear from the above expansions that $x_{4,\b0}=O(1)$.
Let us remark that all the above $O(1)$ depend on the initial values other than $x_{0,\b0}$.
\end{proof}

From here on let us employ the initial condition $I$ above and prove that each iterate $x_{t,\n}$ $(t\ge 0)$ is
a Laurent polynomial of $I$  with coefficients in $\Z[\{a_i,b_i\}_{1\le i\le d}]$ and is irreducible as the Laurent polynomial, under the condition \eqref{condition} on the parameters $l_i,k_i$.
Let us assume that \eqref{pedKdV_eq} is expressed using a new variable $f_{t,\n}$ as
\begin{equation}\label{xtof}
x_{t,\n}=\frac{f_{t,\n}f_{t-2,\n}}{F_{t-1,\n}},\qquad F_{t,\n}:=\prod_{i=1}^d f_{t,\n+\e_i}^{k_i}f_{t,\n-\e_i}^{l_i}.
\end{equation}
Note that the above transformation between $x_{t,\n}$ and $f_{t,\n}$ originates in the singularity pattern in Proposition \ref{prop:singpattern}.
Then we have
\[
\frac{f_{t+2,\n}f_{t,\n}}{F_{t+1,\n}}+\frac{f_{t,\n}f_{t-2,\n}}{F_{t-1,\n}}
=\sum_{i=1}^d \left\{ 
\frac{a_iF_{t,\n+\e_i}^{k_i}}{f_{t+1,\n+\e_i}^{k_i}f_{t-1,\n+\e_i}^{k_i}}
+\frac{b_iF_{t,\n-\e_i}^{l_i}}{f_{t+1,\n-\e_i}^{l_i}f_{t-1,\n-\e_i}^{l_i}}
\right\}.
\]
Thus \eqref{pedKdV_eq} is transformed into the following generalized tau-function form:
\begin{equation}\label{poly-form}
f_{t+2,\n}=-\frac{F_{t+1,\n}f_{t-2,\n}}{F_{t-1,\n}}
+\sum_{i=1}^d \left\{ 
\frac{a_iF_{t+1,\n}F_{t,\n+\e_i}^{k_i}}{f_{t,\n}f_{t+1,\n+\e_i}^{k_i}f_{t-1,\n+\e_i}^{k_i}}
+\frac{b_iF_{t+1,\n}F_{t,\n-\e_i}^{l_i}}{f_{t,\n}f_{t+1,\n-\e_i}^{l_i}f_{t-1,\n-\e_i}^{l_i}}
\right\},
\end{equation}
whose initial variables are $f_{t,\n}$ $(t=-4,-3,-2,-1)$ regarding the correspondence with $x_{t,\n}$ $(t=-2,-1)$.
We define
\begin{align*}
F_{t,\n}^{(+i)}&:=\frac{F_{t,\n}}{f_{t,\n+\e_i}^{k_i}},\quad 
F_{t,\n}^{(-i)}:=\frac{F_{t,\n}}{f_{t,\n-\e_i}^{l_i}},\\
G^{(+i)}_{t,\n}&:=f_{t,\n}^{k_il_i-1}F_{t+1,\n}^{(+i)}F_{t-1,\n}^{(+i)}\left(F_{t,\n+\e_i}^{(-i)}\right)^{k_i},
G^{(-i)}_{t,\n}:=f_{t,\n}^{k_il_i-1}F_{t+1,\n}^{(-i)}F_{t-1,\n}^{(-i)}\left(F_{t,\n-\e_i}^{(+i)}\right)^{l_i}.
\end{align*}
Then \eqref{poly-form} is equivalent to
\begin{align}
f_{t+2,\n}&=\frac{1}{F_{t-1,\n}} \left[-F_{t+1,\n}f_{t-2,\n}
+\sum_{i=1}^d f_{t,\n}^{k_il_i-1}\left\{ 
a_iF_{t+1,\n}^{(+i)}F_{t-1,\n}^{(+i)}\left(F_{t,\n+\e_i}^{(-i)}\right)^{k_i}
\right.\right. \notag \\
&\qquad\qquad \qquad  \qquad \left.\left. 
+b_iF_{t+1,\n}^{(-i)}F_{t-1,\n}^{(-i)}\left(F_{t,\n-\e_i}^{(+i)}\right)^{l_i}
\right\}\right]
\label{poly-form2} \\
&=\frac{1}{F_{t-1,\n}} \left[-F_{t+1,\n}f_{t-2,\n}
+\sum_{i=1}^d \left\{a_i G^{(+i)}_{t,\n}+b_iG^{(-i)}_{t,\n}\right\} \right].
\label{poly-abb}
\end{align}

It is worth noting that the two-dimensional equation \eqref{eq:nonlinear} (resp. \eqref{eq:laurent}) in the introduction is obtained by taking $d=1$ and $s:=\frac{1}{2}(t+n_1)$, $m:=\frac{1}{2}(t-n_1)$ of \eqref{pedKdV_eq} (resp. \eqref{poly-form}).
Since the equation \eqref{eq:nonlinear} is nonintegrable \cite{qintegrable}, the equation \eqref{pedKdV_eq}
is also nonintegrable in the sense of degree growth.

\subsection{Laurent, the irreducibility and the coprimeness properties of \eqref{pedKdV_eq} and \eqref{poly-form}}

Let us denote by $\mR$ the following ring of Laurent polynomials:
\begin{equation}\label{Laurent_ring_mR}
\mR:=\Z\left[\{f_{-4,\n}^\pm,f_{-3,\n}^\pm,f_{-2,\n}^\pm,f_{-1,\n}^\pm\},\, \{a_i,b_i\}\right].
\end{equation}
Here is our main theorem:
\begin{theorem}\label{th-exdKdV_f}
Let us assume the condition \eqref{condition}.
Then in equation \eqref{poly-form}, we have that $f_{t,\n} \in \mR$
for arbitrary $t,\,\n$, and that $f_{t,\n}$ is irreducible in $\mR$. Moreover, $f_{t,\n}$ and $f_{s,\m}$ are coprime with each other in $\mR$ for arbitrary $(t,\,\n)\neq (s,\,\m)$.
\end{theorem}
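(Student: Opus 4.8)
The plan is to prove the three assertions—Laurent property, irreducibility, and pairwise coprimeness—by a simultaneous induction on the ``time'' coordinate $t$, exploiting the explicit structure of the recurrence \eqref{poly-abb}. I would first set up the induction so that at each step I assume $f_{s,\m}\in\mR$ for all $s<t$ together with irreducibility and pairwise coprimeness among all previously computed $f_{s,\m}$. The base cases are the initial variables $f_{-4,\n},\dots,f_{-1,\n}$, which are trivially irreducible units-free generators of $\mR$ and coprime to one another.

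\medskip

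For the \textbf{Laurent property}, the obstacle is the denominator $F_{t-1,\n}$ in \eqref{poly-abb}. I would argue that the bracketed numerator
\[
N_{t,\n}:=-F_{t+1,\n}f_{t-2,\n}+\sum_{i=1}^d\left\{a_iG^{(+i)}_{t,\n}+b_iG^{(-i)}_{t,\n}\right\}
\]
is divisible by $F_{t-1,\n}=\prod_{i=1}^d f_{t-1,\n+\e_i}^{k_i}f_{t-1,\n-\e_i}^{l_i}$. The key point is that each factor $f_{t-1,\n\pm\e_i}$ appears in $F_{t-1,\n}$, and one checks that it divides every summand of $N_{t,\n}$: in the term $G^{(+i)}_{t,\n}=f_{t,\n}^{k_il_i-1}F_{t+1,\n}^{(+i)}F_{t-1,\n}^{(+i)}(F_{t,\n+\e_i}^{(-i)})^{k_i}$ the factor $F_{t-1,\n}^{(+i)}$ supplies all of $F_{t-1,\n}$ except the missing power $f_{t-1,\n+\e_i}^{k_i}$, which must then be recovered from $(F_{t,\n+\e_i}^{(-i)})^{k_i}$ using the evolution of $f_{t,\n+\e_i}$; so the crucial sub-lemma is a divisibility identity saying $f_{t-1,\n+\e_i}^{k_i}\mid (F_{t,\n+\e_i}^{(-i)})^{k_i}$ modulo the already-known Laurent expressions, and symmetrically for the $-\e_i$ and the $b_i$ terms. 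Once each $f_{t-1,\n\pm\e_i}$ is shown to divide $N_{t,\n}$ to the required order, coprimeness of the distinct factors $f_{t-1,\m}$ (from the induction hypothesis) upgrades these individual divisibilities to divisibility by the full product $F_{t-1,\n}$, giving $f_{t+2,\n}\in\mR$.

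\medskip

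For \textbf{irreducibility}, I would reduce the Laurent polynomial $f_{t+2,\n}=N_{t,\n}/F_{t-1,\n}$ modulo a carefully chosen prime, or rather specialize one of the initial variables so that most terms collapse and the surviving part is visibly irreducible. Concretely, the standard technique here is to view $f_{t+2,\n}$ as a polynomial in the single ``latest'' initial variable on which it genuinely depends and show it is primitive and of degree one in a suitable generator, or to localize at $f_{t-2,\n}$: setting the other contributions aside, the term $-F_{t+1,\n}f_{t-2,\n}/F_{t-1,\n}$ is a single monomial in $f_{t-2,\n}$ while the $\sum_i$ terms are independent of $f_{t-2,\n}$, so $f_{t+2,\n}$ is affine-linear in $f_{t-2,\n}$; linearity plus the fact that its constant term and leading coefficient (as functions of $f_{t-2,\n}$) share no common factor forces irreducibility. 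Verifying that gcd condition is where the condition \eqref{condition} and the induction hypothesis on coprimeness of earlier $f$'s must be invoked.

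\medskip

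For \textbf{pairwise coprimeness}, the bulk of the work is to show the new iterate $f_{t+2,\n}$ is coprime to every previously constructed $f_{s,\m}$; coprimeness among the old ones is inductive. Since each $f_{s,\m}$ is already irreducible, it suffices to show no $f_{s,\m}$ divides $f_{t+2,\n}$, equivalently that $f_{t+2,\n}\not\equiv 0$ modulo the prime ideal $(f_{s,\m})$. I would substitute $f_{s,\m}=0$ into the expression $N_{t,\n}/F_{t-1,\n}$ and check the result is a nonzero Laurent polynomial; the condition \eqref{condition}, exactly as it forced the cancellation order $\varepsilon^{M-l_i-1}\to O(1)$ in Proposition \ref{prop:singpattern}, is what guarantees the surviving leading term does not vanish. \emph{The main obstacle I anticipate is the Laurent/divisibility step}: establishing that $F_{t-1,\n}$ divides $N_{t,\n}$ to the full multiplicity $k_i$ or $l_i$ in each factor, since this requires tracking how the recurrence \eqref{poly-abb} substituted into $F_{t,\n\pm\e_i}^{(\mp i)}$ reproduces precisely the missing powers of $f_{t-1,\n\pm\e_i}$, and getting these exponents to match is where the arithmetic of the condition \eqref{condition} is genuinely used rather than merely invoked.
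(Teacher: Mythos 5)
Your induction setup and your plan for the Laurent step are broadly in the spirit of the paper (the paper's Claim \ref{ft2mR} likewise converts the divisibility of the numerator by $F_{t-1,\n}$ into the Laurent property, using pairwise coprimeness of the three preceding time levels, after a hands-on base computation at $t=3$ where the condition \eqref{condition} enters). The genuine gap is in your irreducibility argument. You propose to view $f_{t+2,\n}$ as affine-linear in $f_{t-2,\n}$ and to conclude irreducibility from coprimality of the ``leading coefficient'' and ``constant term''. But $f_{t-2,\n}$ is not a free generator of $\mR$ for $t\ge 2$: it is itself a complicated Laurent polynomial in the initial variables $f_{-4,\m},\dots,f_{-1,\m}$. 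The criterion ``degree one in a variable with coprime coefficients implies irreducible'' is only valid when that variable is an actual indeterminate of the ring; an expression $A+B\,g$ with $\gcd(A,B)=1$ can perfectly well factor when $g$ is a composite polynomial (e.g.\ $-1+g$ with $g=y^2$). So the linearity observation, which the paper only exploits at $t=0$ where $f_{0,\n}$ really is linear in the initial variable $f_{-4,\n}$, cannot carry the induction.

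What the paper does instead, and what is missing from your proposal, is the following machinery: Lemma \ref{lemma-common} (invariance of factorization under the shift of the initial-data window) constrains any hypothetical factor of a new iterate to be, up to monomials, a product of earlier irreducible iterates; these candidate factors are then excluded for $t=1,\dots,4$ by explicit specializations of the initial data (showing, e.g., $\overline{f}_{3,\n}\neq 0$ after setting $x\to 0$), and for $t\ge 5$ by the degree-growth argument on the auxiliary sequence $y_t$ (Claims \ref{deg_prop1} and \ref{deg_prop2}), which shows $d_t>d_0+d_r$ so that a factorization $f_{t,\n}=u\,f_{0,\m_0}f_{r,\m_r}$ is impossible. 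The same $y_t$-argument, combined with the nonvanishing constant term of $y_t$, is what yields coprimeness of iterates at different times; your alternative plan of substituting $f_{s,\m}=0$ and checking nonvanishing ``by the condition \eqref{condition}'' is not backed by any concrete mechanism for verifying the nonvanishing uniformly in $s$, $\m$, and is precisely the part the paper circumvents. As it stands, your proposal would establish the Laurent property (modulo the base-case computation) but not irreducibility or coprimeness.
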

\begin{remark}
Since all the values of $k_i,l_i$ are even, the condition \eqref{condition} is equivalent to $\min_i[l_ik_i-1]\ge \max_i[l_i,\, k_i]$.
Also note that the condition \eqref{condition} is trivially satisfied if $k_i=l_i=k\in 2\Z_{>0}$ for all $i$.
\end{remark}
The proof of Theorem \eqref{th-exdKdV_f} is quite laborous and will be given in \textsection \ref{subsec:prf}.
Using Theorem \eqref{th-exdKdV_f} we obtain the coprimeness of the iterates of \eqref{pedKdV_eq}:
\begin{theorem}\label{ex-dKdV_coprime}
Suppose that $|t-t'|+\sum_{i=1}^d |n_i-n_i'|>2$, where $\n=(n_1,\cdots,n_d), \n'=(n_1',\cdots, n_d')$.
Then the two iterates $x_{t,\n},\,x_{t',\n'}$ of the equation \eqref{pedKdV_eq}
are coprime with each other in the field
\[
\mathbb{Q}\left(\{x_{-2,\m}, x_{-1,\m}\}_{\m\in\mathbb{Z}^d},\, \{a_i,b_i\}_{1\le i\le d}\right).
\]
\end{theorem}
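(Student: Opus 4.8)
The plan is to deduce Theorem \ref{ex-dKdV_coprime} from the main Theorem \ref{th-exdKdV_f} by pulling the coprimeness of the tau-variables $f_{t,\n}$ back through the change of variables \eqref{xtof}. First I would fix the dictionary between the two coordinate systems: the initial data $\{x_{-2,\m},x_{-1,\m}\}$ correspond to the four initial layers $f_{t,\n}$ $(t=-4,-3,-2,-1)$ via \eqref{xtof}, and since each $x_{t,\n}$ is the ratio $f_{t,\n}f_{t-2,\n}/F_{t-1,\n}$ with $F_{t-1,\n}$ a monomial in the $f$'s surrounding the site, every iterate $x_{t,\n}$ is a Laurent monomial in the irreducible, pairwise-coprime Laurent polynomials $f_{s,\m}\in\mR$ (Theorem \ref{th-exdKdV_f}). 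Concretely, $x_{t,\n}=f_{t,\n}f_{t-2,\n}\prod_{i=1}^d f_{t-1,\n+\e_i}^{-k_i}f_{t-1,\n-\e_i}^{-l_i}$, so the numerator and denominator of $x_{t,\n}$, once written in lowest terms over the unique factorization domain $\mR$, involve exactly the prime factors $f_{t,\n},f_{t-2,\n}$ in the numerator and the $f_{t-1,\n\pm\e_i}$ in the denominator.

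Next I would make precise what coprimeness of $x_{t,\n}$ and $x_{t',\n'}$ means in the stated field: writing each $x$-iterate as $P/Q$ with $P,Q\in\mR$ having no common non-unit factor, the two iterates are coprime precisely when the numerator of one shares no irreducible factor with the numerator or denominator of the other. Because $\mR$ is a UFD and the $f_{s,\m}$ are distinct primes that are pairwise coprime, the whole question reduces to a purely combinatorial comparison of the index sets of $f$-primes attached to each site. The numerator of $x_{t,\n}$ uses the sites $(t,\n)$ and $(t-2,\n)$; its denominator uses the $2d$ sites $(t-1,\n\pm\e_i)$. I would then show that under the hypothesis $|t-t'|+\sum_i|n_i-n_i'|>2$ these two collections of lattice sites are disjoint, so no common prime $f_{s,\m}$ can appear.

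The core of the argument is therefore the lattice-geometric claim: if the two base points $(t,\n)$ and $(t',\n')$ satisfy $|t-t'|+\sum_i|n_i-n_i'|>2$, then the site-sets $S(t,\n):=\{(t,\n),(t-2,\n)\}\cup\{(t-1,\n\pm\e_i)\}_{i=1}^d$ and $S(t',\n')$ are disjoint. I would argue this by checking the $\ell^1$-distance between any member of $S(t,\n)$ and any member of $S(t',\n')$: each member of $S(t,\n)$ lies within $\ell^1$-distance at most $1$ of $(t-1,\n)$ (the offsets are $(\pm1,\b0)$ and $(0,\pm\e_i)$), and likewise each member of $S(t',\n')$ lies within distance $1$ of $(t'-1,\n')$; since the $\ell^1$-distance between the two centers $(t-1,\n)$ and $(t'-1,\n')$ equals $|t-t'|+\sum_i|n_i-n_i'|>2$, the triangle inequality forces every cross distance to be strictly positive, giving disjointness.

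The main obstacle I anticipate is a boundary/parity subtlety rather than the generic-case combinatorics: I must confirm that after reducing $x_{t,\n}=P/Q$ to lowest terms no accidental cancellation occurs that would strip a prime $f_{s,\m}$ from the expression, and that the primes $f_{t,\n},f_{t-2,\n},f_{t-1,\n\pm\e_i}$ genuinely survive as the exact factors (none of them being a unit or coinciding, which is guaranteed by irreducibility and pairwise coprimeness in Theorem \ref{th-exdKdV_f}). I would also need to keep track of the restriction that the evolution lives on the parity-matched sublattice, ensuring that the sites appearing in $S(t,\n)$ are legitimately defined tau-variables in $\mR$. Once these bookkeeping points are settled, coprimeness of $x_{t,\n}$ and $x_{t',\n'}$ in the field follows immediately from the disjointness of their $f$-prime supports.
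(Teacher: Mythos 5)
Your combinatorial core is fine: each $x_{t,\n}=f_{t,\n}f_{t-2,\n}/F_{t-1,\n}$ is a Laurent monomial in the $f$'s, the relevant sites $(t,\n)$, $(t-2,\n)$, $(t-1,\n\pm\e_i)$ all lie within $\ell^1$-distance $1$ of $(t-1,\n)$, and the hypothesis $|t-t'|+\sum_i|n_i-n_i'|>2$ forces the two site-sets to be disjoint; this is indeed the implicit reason the bound $2$ appears. The genuine gap is in your very first step, the ``dictionary''. The change of variables \eqref{xtof} relates the \emph{four} initial layers $f_{-4},f_{-3},f_{-2},f_{-1}$ to only \emph{two} layers $x_{-2},x_{-1}$, so there is no bijective correspondence and the $x$-initial data are not free generators of $\mR$. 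Theorem \ref{th-exdKdV_f} gives irreducibility and pairwise coprimeness of the $f_{t,\n}$ as Laurent polynomials in the $f$-initial data, whereas the statement to be proved concerns factorization over the polynomial ring in the $x$-initial data; your reduction of the latter to ``disjointness of $f$-prime supports in $\mR$'' is exactly the step that is not justified. Concretely, if you write $x_{t,\n}=N/D$ in lowest terms over $\mathbb{Q}[\{x_{-2,\m},x_{-1,\m}\}]$ and substitute the $f$-monomial expressions for the $x$'s, the images of $N$ and $D$ need not remain reduced in $\mR$: they agree with $f_{t,\n}f_{t-2,\n}$ and $F_{t-1,\n}$ only up to a common factor $C\in\mR$, and nothing in your argument prevents a putative common irreducible factor of two $x$-iterates (as functions of the $x$'s) from being absorbed into these extraneous factors $C$ rather than into the $f$-primes.

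The paper closes precisely this gap with an auxiliary-variable (gauge) device: it sets $f_{-4,\n}=g_{-4,\n}$, $f_{-3,\n}=g_{-3,\n}$ for new indeterminates $g$, and defines $f_{-2,\n}=x_{-2,\n}F_{-3,\n}/g_{-4,\n}$, $f_{-1,\n}=x_{-1,\n}F_{-2,\n}/g_{-3,\n}$, so that the passage between $\{f_{-4},f_{-3},f_{-2},f_{-1}\}$ and $\{g_{-4},g_{-3},x_{-2},x_{-1}\}$ is a monomial transformation with monomial inverse. Then $\mR=\Z\left[\{g^\pm_{-4,\m},g^\pm_{-3,\m},x^\pm_{-2,\m},x^\pm_{-1,\m}\},\{a_i,b_i\}\right]$, Theorem \ref{th-exdKdV_f} asserts irreducibility and coprimeness of the $f_{t,\n}$ in a Laurent ring in which the $x$-initial data (together with the $g$'s) are genuinely free variables, an irreducible non-monomial polynomial in the $x$'s alone stays irreducible there, and reduced fractions over $\mathbb{Q}[\{x\}]$ stay reduced up to monomials; since the iterates $x_{t,\n}$ are independent of the $g$'s, coprimeness then descends to $\mathbb{Q}\left(\{x_{-2,\m},x_{-1,\m}\},\{a_i,b_i\}\right)$. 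Your proposal is missing this transfer step, which is essentially the entire content of the paper's short proof; with it added, your disjointness argument does finish the job.
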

\begin{proof}
Let us consider the equation \eqref{poly-form} with the following initial values:
\begin{align*}
f_{-4,\n}&:=g_{-4,\n}, \ f_{-3,\n}:=g_{-3,\n},\\
f_{-2,\n}&:=\frac{x_{-2,\n}F_{-3,\n}}{g_{-4,\n}}\ \left(=\frac{x_{-2,\n}\prod_{i=1}^d g_{-3,\n+\e_i}^{k_i}g_{-3,\n-\e_i}^{l_i}}{g_{-4,\n}}\right),\\
f_{-1,\n}&:=\frac{x_{-1,\n}F_{-2,\n}}{g_{-3,\n}} \left(=\frac{x_{-1,\n}\prod_{i=1}^d f_{-2,\n+\e_i}^{k_i}f_{-2,\n-\e_i}^{l_i}}{g_{-3,\n}}\right),
\end{align*}
where $g_{-3,\n}$, $g_{-4,\n}$ are auxiliary variables.
From Theorem \ref{th-exdKdV_f} we have
\[
f_{t,\n} \in  \Z\left[\{f_{-4,\m}^\pm,f_{-3,\m}^\pm,f_{-2,\m}^{\pm},f_{-1,\m}^{\pm}\},\,\{a_i,b_i\} \right]=\Z\left[\{g_{-4,\m}^\pm,g_{-3,\m}^\pm,x_{-2,\m}^{\pm},x_{-1,\m}^{\pm}\},\,\{a_i,b_i\} \right],
\]
and $f_{t,\n}$ is irreducible.
Since $x_{t,\n}$ is independent of $\{g_{-4,\m}, g_{-3,\m}\}$, any pair $x_{t,\n}$, $x_{t',\n'}$ is coprime in $\mathbb{Q}(\{x_{-2,\m}, x_{-1,\m}\},\{a_i,b_i\})$.
\end{proof}

\subsection{Proof of Theorem \eqref{th-exdKdV_f}} \label{subsec:prf}
The proof is given inductively with respect to $t (\ge 0)$.

\setcounter{step}{0}
\begin{step}[case of $t=0$]
Taking $t=-2,-1,0$ in \eqref{poly-form2} we immediately obtain $f_{t,\n}\in \mR$ for $t=0,1,2$.
In particular $f_{0,\n}$ is linear with respect to $f_{-4,\n}$ and is not a unit.
Thus $f_{0,\n}$ is irreducible for any $\n$.
\end{step}

\begin{step}[case of $t=1$]
From a property of the factorization of Laurent polynomials in Lemma \ref{lemma-common}, we have 
\[
f_{1,\n}=\left(\prod_{\m} f_{0,\m}^{\alpha_{\m}}\right) f_{1,\n}'\quad \left(\alpha_m \in \Z_{\ge 0}\right),
\]
where $f_{1,\n}'$ is irreducible.
If we fix a certain $\n$, then, from the time evolution rule, $\alpha_{\m}$ must be zero unless $\n=\m$ or $\n=\m\pm \e_i$ for some $i$.
It is sufficient to prove that $\alpha_{\n}=\alpha_{\n\pm\e_i}=0$ for any $i$.
Let us substitute the following values in the initial variables:
\begin{align}
&f_{-4,\m}=-1+\delta_{\m,\n}+\sum_{i=1}^d(a_i+b_i)\;\;(\forall\m),\quad f_{-3,\m}=f_{-2,\m}=f_{-1,\m}=1 \ (\forall \m), \label{initt1}
\end{align}
then we have $f_{0,\m}=-f_{-4,\m}+\sum_{i=1}^d (a_i+b_i)=1-\delta_{\m,\n}$, where $\delta_{\m,\n}$ is Kronecker's delta.
Thus
\[
f_{1,\m}=\left\{
\begin{array}{cl}
-1+\sum_{i=1}^d (a_i+b_i) &\quad (\m=\n) \\
  b_i &\quad (\m=\n+\e_i) \\
a_i &\quad (\m=\n-\e_i) \\
-1+\sum_{i=1}^d (a_i+b_i) &\quad (\mbox{otherwise}) 
\end{array}
\right.
.
\]
Therefore $f_{1,\n}\neq 0$ and $f_{0,\n}=0$ at the same time, which indicates that $\alpha_{\n}=0$.
A similar discussion leads to $\alpha_{\n \pm \e_i}=0$ and thus $f_{1,\n}$ is irreducible.
The pairwise coprimeness is clear from the fact that $f_{1,\n}$ has the variable $f_{-4,\n'}$ for at least one $\n'$, which is absent from $f_{1,\m}$ with $\n \neq \m$.
\end{step}

\begin{step}[case of $t=2$]
We have
\[
f_{2,\n}=\left(\prod_{\m} f_{0,\m}^{\alpha_{\m}}\right)f_{2,\n}'\quad (\alpha_{\m} \in \Z_{\ge 0}),
\]
where $f_{2,\n}'$ is irreducible.
We need to prove that the indices $\alpha_{\m}$ are all zero.
Let us take the initial values \eqref{initt1} as in the case of $t=1$.
Since $l_i$ is even,  the iterate $F_{0,\n+\e_i}$ must have a factor $f_{0,\n}^2$.
Thus using \eqref{poly-form2} we have
$f_{2,\n}=-F_{1,\n}=-\prod_{i=1}^d (a_ib_i) \ne 0$.
To prove $f_{2,\n\pm\e_i\pm\e_j}\neq 0$ (excluding the case above), it is sufficient to
assume that $a_i,\,b_i$ are negative constants. Then we observe that every term in the right hand side of \eqref{poly-form2} is nonpositive, and in particular the first term is negative: i.e., $-\frac{F_{1,\n}f_{-2,\n}}{F_{-1,\n}}<0$ since $F_{1,\n}>0$.
Finally, it is readily obtained that $f_{2,\m}\neq 0$ for every $\m \ne  \n_0 \pm \e_i \pm \e_j$,
since by taking $a_i=b_i=0$ we have $f_{2,\n}=-F_{1,\n}$.
Thus the irreducibility of $f_{2,\n}$ is proved.
The coprimeness of $f_{2,\n}$ and $f_{2,\m}$ for $\n\neq \m$ is readily obtained.
The coprimeness of $f_{2,\n}$ and $f_{1,\m}$ is proved later in a more general setting
in Step~\ref{mutually_coprime_prop}.
\end{step}

\begin{step}[case of $t=3$]
The proof of the Laurent property requires some tedious calculations.
We try to elaborate on this to see how the condition \eqref{condition} is used.
To ease notation let us define $g_{3,\n}$ by
\[
g_{3,\n}:=F_{0,\n}f_{3,\n} =-F_{2,\n}f_{-1,\n}+\sum_{i=1}^d \left\{a_i G^{(+i)}_{1,\n}+b_iG^{(-i)}_{1,\n}\right\}.
\]
Then we have
\begin{equation}
g_{3,\n}=-F_{2,\n}f_{-1,\n}+a_1G_{1,\n}^{(+1)}+f_{0,\n+\e_1}^{k_1}\times (\mbox{polynomial in $f_{t,\m}$ ($t \le 2$)}) \label{g3expression}.
\end{equation}
Let us define  $g_{3,\n}^{(1)}:=-F_{2,\n}f_{-1,\n}+a_1G_{1,\n}^{(+1)}$
and calculate further:
since
$G_{1,\n}^{(+1)}=f_{1,\n}^{k_1l_1-1}F_{2,\n}^{(+1)}F_{0,\n}^{(+1)}\left(F_{1,\n+\e_1}^{(-1)}\right)^{k_1}$,
we have
\[
g_{3,\n}^{(1)}=F_{2,\n}^{(+1)}\left( -f_{2,\n+\e_1}^{k_1}f_{-1,\n}+a_1f_{1,\n}^{k_1l_1-1}F_{0,\n}^{(+1)}\left(F_{1,\n+\e_1}^{(-1)}\right)^{k_1}      \right).
\]
Moreover,
\begin{align*}
f_{2,\n+\e_1}&=\frac{1}{F_{-1,\n+\e_1}} \left[-F_{1,\n+\e_1}f_{-2,\n+\e_1}
+\sum_{i=1}^d \left\{a_i G^{(+i)}_{0,\n+\e_1}+b_iG^{(-i)}_{0,\n+\e_1}\right\} \right]\\
&=\frac{1}{F_{-1,\n+\e_1}} \left[-F_{1,\n+\e_1}f_{-2,\n+\e_1}+f_{0,\n+\e_1}^{\min_i[k_il_i-1]}\times (\mbox{polynomial in $f_{t,\m}$ ($t \le 1$)})\right].
\end{align*}
Thus using the condition \eqref{condition} we have
\begin{equation}
f_{2,\n+\e_1}^{k_1}=\frac{1}{F_{-1,\n+\e_1}^{k_1}}\left[F_{1,\n+\e_1}^{k_1}f_{-2,\n+\e_1}^{k_1}+f_{0,\n+\e_1}^{k_1}\times (\mbox{polynomial in $f_{t,\m}$ ($t \le 1$)})\right] \label{f2k1expression}.
\end{equation}
Therefore using $F_{1,\n+\e_1}=F_{1,\n+\e_1}^{(-1)}f_{1,\n}^{l_1}$, we obtain
\begin{align*}
g_{3,\n}^{(1)}&=
\frac{F_{2,\n}^{(+1)} }{F_{-1,\n+\e_1}^{k_1}}\Bigl[\left(F_{1,\n+\e_1}^{(-1)}\right)^{k_1}f_{1,\n}^{k_1l_1-1} (
-f_{1,\n}f_{-2,\n+\e_1}^{k_1}f_{-1,\n}+a_1F_{0,\n}^{(+1)}F_{-1,\n+\e_1}^{k_1}) \\
&\qquad\qquad \qquad +f_{0,\n+\e_1}^{k_1}\times (\mbox{polynomial in $f_{t,\m}$ ($t \le 1$)})
\Bigr].
\end{align*}
Furthermore, we have
\begin{align*}
f_{1,\n}&=\frac{1}{F_{-2,\n}} \left[-F_{0,\n}f_{-3,\n}
+\sum_{i=1}^d \left\{a_i G^{(+i)}_{-1,\n}+b_iG^{(-i)}_{-1,\n}\right\} \right]\\
&=\frac{1}{F_{-2,\n}} \left[a_1G^{(+1)}_{-1,\n}+f_{0,\n+\e_1}^{k_1}\times
(\mbox{polynomial in $f_{t,\m}$ ($t \le 0$)})\right].
\end{align*}
Therefore,
\[
-f_{1,\n}f_{-2,\n+\e_1}^{k_1}f_{-1,\n}+a_1F_{0,\n}^{(+1)}F_{-1,\n+\e_1}^{k_1}=\frac{f_{0,\n+\e_1}^{k_1}}{F_{-2,\n}}\times
(\mbox{polynomial in $f_{t,\m}$ ($t \le 0$)}).
\]
From the calculations above we have
\[
g_{3,\n}^{(1)}=\frac{f_{0,\n+\e_1}^{k_1}}{F_{-1,\n+\e_1}^{k_1}F_{-2,\n}} \left(\mbox{polynomial in $f_{t,\m}$ ($t \le 2$)}\right).
\]
Thus using the equation \eqref{g3expression} we have
\begin{equation*}
g_{3,\n}=\frac{f_{0,\n+\e_1}^{k_1}}{F_{-1,\n+\e_1}^{k_1}F_{-2,\n}} \left(\mbox{polynomial in $f_{t,\m}$ ($t \le 2$)}\right).
\end{equation*}
Conducting a  similar calculation as we have done for $k_1$ in \eqref{f2k1expression} for $k_2,\cdots, k_d,l_1,\cdots, l_d$ and using the condition \eqref{condition} repeatedly, we have
\[
g_{3,\n}=\frac{F_{0,\n}}{\prod_{i=1}^d\left(F_{-1,\n+\e_i}^{k_i}F_{-1,\n-\e_i}^{l_i}\right)F_{-2,\n}} \times \left(\mbox{polynomial in $f_{t,\m}$ ($t \le 2$)}\right). \label{g3expression2}
\]
Thus the denominator $F_{0,\n}$ of $f_{3, \n}=\frac{g_{3,\n}}{F_{0,\n}}$ is cancelled.
Thus we have proved that $f_{3,\n} \in \mR$.
\end{step}

\begin{step}[Laurent property from the coprimeness]
The following Claim \ref{ft2mR} is used to prove the Laurent and the irreducibility properties
inductively with respect to $t\ge 4$.
\begin{claim} \label{ft2mR}
Let us fix $t\ge 3$ and suppose that $f_{s,\n}\in\mR$ for every $s\le t+1$.  
Moreover let us suppose that $f_{t-1,\m_1}, \,f_{t-2,\m_2},\,f_{t-3,\m_3}$ are pairwise coprime for any $\m_1,\m_2,\m_3\in\mathbb{Z}^d$.
Then we have $f_{t+2,\n} \in \mR$.
\end{claim}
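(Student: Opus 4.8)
The plan is to mimic the explicit computation carried out in the case $t=3$, but to organize it as an order-of-vanishing argument so that the bookkeeping stays manageable. Writing $g_{t+2,\n}:=F_{t-1,\n}\,f_{t+2,\n}=-F_{t+1,\n}f_{t-2,\n}+\sum_{i=1}^d\left\{a_iG^{(+i)}_{t,\n}+b_iG^{(-i)}_{t,\n}\right\}$, the hypothesis $f_{s,\m}\in\mR$ for $s\le t+1$ gives $g_{t+2,\n}\in\mR$ at once, so the entire content of the claim is that the denominator $F_{t-1,\n}=\prod_{i=1}^d f_{t-1,\n+\e_i}^{k_i}f_{t-1,\n-\e_i}^{l_i}$ divides $g_{t+2,\n}$ in the UFD $\mR$. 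The $2d$ factors $f_{t-1,\n\pm\e_i}$ are distinct irreducibles and are pairwise coprime, so it suffices to prove, for each $i$, that $f_{t-1,\n+\e_i}^{k_i}\mid g_{t+2,\n}$ and $f_{t-1,\n-\e_i}^{l_i}\mid g_{t+2,\n}$ separately; these two cases are interchanged by the symmetry $(+,k_i)\leftrightarrow(-,l_i)$, so I would treat only the first. Fix $i$, abbreviate $p:=f_{t-1,\n+\e_i}$, and let $v_p$ denote the $p$-adic valuation on $\mR$; the goal becomes $v_p(g_{t+2,\n})\ge k_i$.

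First I would dispose of the harmless terms. Since $p$ occurs in $F_{t-1,\n}$ only through the slot $f_{t-1,\n+\e_i}^{k_i}$, every factor $F_{t-1,\n}^{(\pm j)}$ that still retains that slot carries $p^{k_i}$; inspecting the definitions, this covers $G^{(+j)}_{t,\n}$ for $j\ne i$ and $G^{(-j)}_{t,\n}$ for all $j$, so all of these have $v_p\ge k_i$. The two terms in which $p$ may fail to appear to order $k_i$ are the first term $-F_{t+1,\n}f_{t-2,\n}$ and the single term $a_iG^{(+i)}_{t,\n}$, whose factor $F_{t-1,\n}^{(+i)}$ has precisely $p^{k_i}$ deleted. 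I would isolate their sum $g^{(i)}:=-F_{t+1,\n}f_{t-2,\n}+a_iG^{(+i)}_{t,\n}$ and, exactly as in the $t=3$ computation, factor out the common $F_{t+1,\n}^{(+i)}$ (using $F_{t+1,\n}=F_{t+1,\n}^{(+i)}f_{t+1,\n+\e_i}^{k_i}$) to reduce to showing that the bracket $-f_{t+1,\n+\e_i}^{k_i}f_{t-2,\n}+a_if_{t,\n}^{k_il_i-1}F_{t-1,\n}^{(+i)}\bigl(F_{t,\n+\e_i}^{(-i)}\bigr)^{k_i}$ is divisible by $p^{k_i}$.

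The extraction of $p^{k_i}$ then proceeds in two rounds, reproducing the structure of \eqref{f2k1expression} and the lines following it. In the first round I expand $f_{t+1,\n+\e_i}$ by \eqref{poly-abb} at $(t+1,\n+\e_i)$: every term of its numerator except the main term $-F_{t,\n+\e_i}f_{t-3,\n+\e_i}$ carries $p^{\min_r[k_rl_r-1]}$, and since condition \eqref{condition} gives $\min_r[k_rl_r-1]>k_i$, raising to the $k_i$-th power and using that $k_i$ is even yields $f_{t+1,\n+\e_i}^{k_i}\equiv F_{t,\n+\e_i}^{k_i}f_{t-3,\n+\e_i}^{k_i}/F_{t-2,\n+\e_i}^{k_i}\pmod{p^{k_i}}$. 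Substituting this, cancelling $\bigl(F_{t,\n+\e_i}^{(-i)}\bigr)^{k_i}$ via $F_{t,\n+\e_i}=F_{t,\n+\e_i}^{(-i)}f_{t,\n}^{l_i}$, and pulling out $f_{t,\n}^{k_il_i-1}$ reduces the bracket modulo $p^{k_i}$ to a unit times $-f_{t,\n}f_{t-3,\n+\e_i}^{k_i}f_{t-2,\n}/F_{t-2,\n+\e_i}^{k_i}+a_iF_{t-1,\n}^{(+i)}$. In the second round I expand $f_{t,\n}$ by \eqref{poly-abb} at $(t,\n)$: now $p^{k_i}$ divides every term of its numerator except $a_iG^{(+i)}_{t-2,\n}$, because $-F_{t-1,\n}f_{t-4,\n}$ contains the full $F_{t-1,\n}$ and each surviving $F_{t-1,\n}^{(\pm j)}$ retains the $f_{t-1,\n+\e_i}^{k_i}$ slot, whence $f_{t,\n}\equiv a_if_{t-2,\n}^{k_il_i-1}F_{t-1,\n}^{(+i)}\bigl(F_{t-2,\n+\e_i}^{(-i)}\bigr)^{k_i}/f_{t-3,\n+\e_i}^{k_i}\pmod{p^{k_i}}$ after using $F_{t-3,\n}=F_{t-3,\n}^{(+i)}f_{t-3,\n+\e_i}^{k_i}$; feeding this in and simplifying with $F_{t-2,\n+\e_i}=F_{t-2,\n+\e_i}^{(-i)}f_{t-2,\n}^{l_i}$ collapses the bracket to $-a_iF_{t-1,\n}^{(+i)}+a_iF_{t-1,\n}^{(+i)}=0\pmod{p^{k_i}}$, the desired divisibility. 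Throughout, the denominators cleared at each step — $F_{t-2,\n+\e_i}$, $F_{t-3,\n}$, $f_{t-3,\n+\e_i}$, all at time levels $t-2$ and $t-3$ — are coprime to $p=f_{t-1,\n+\e_i}$ precisely by the assumed pairwise coprimeness of the levels $t-1,t-2,t-3$, which is what licenses passing between congruences for Laurent polynomials and their cleared-denominator forms. The main obstacle is exactly this bookkeeping: keeping the valuation argument clean means verifying at every substitution that the factors cancelled or divided out are genuinely coprime to $p$, and checking that condition \eqref{condition} is strong enough to kill the remainder terms in both rounds (it enters as $\min_r[k_rl_r-1]>k_i$ in the first round and as $k_il_i-1\ge k_i$ in isolating the single surviving term in the second). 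Assembling the two signs and all $i$ then gives $F_{t-1,\n}\mid g_{t+2,\n}$, hence $f_{t+2,\n}\in\mR$.
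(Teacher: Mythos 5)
Your argument is correct and is essentially the paper's: the paper proves the claim by generalizing the explicit $t=3$ computation into the identity \eqref{main_cal}, whose derivation is exactly your two-round expansion of $f_{t+1,\n+\e_i}$ and then $f_{t,\n}$, using condition \eqref{condition} in the first round and the coprimeness of level $t-1$ with levels $t-2$, $t-3$ to clear denominators. Organizing the cancellation prime-by-prime as a $p$-adic valuation bound, rather than exhibiting the single global factorization \eqref{main_cal}, is only a presentational difference.
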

Claim \ref{ft2mR} is readily proved using the expression
\begin{equation}\label{main_cal}
F_{t-1,\n}f_{t+2,\n}=\frac{F_{t-1,\n}\times \left(\mbox{polynomial in $f_{s,\m}$ ($s \le t+1$)}\right)}{\prod_{i=1}^d\left(F_{t-2,\n+\e_i}^{k_i}F_{t-2,\n-\e_i}^{l_i}\right)F_{t-3,\n}}.
\end{equation}
\end{step}

\begin{step}[preparations on the degree growth]
From here on we study the irreducibility and the coprimeness of $f_{t,\n}$, since the Laurent property automatically follows from the two properties using Claim \ref{ft2mR}.
Let us define the sequence $\{y_t\}_{t\ge -4}$ by the values $y_t$ of $f_{t,\n}$ when we take particular initial values: all the initial variables are substituted by $1$ with the 
exception of $f_{-4,\n}=-x$ for every $\n$, where $x$ is an auxiliary variable.
Then $y_t$ satisfies the following recurrence:
\begin{equation}\label{y_eq}
y_{t+1}=-\frac{y_t^Ny_{t-3}}{y_{t-2}^N}+\sum_{i=1}^d \left\{ \frac{a_i y_t^{N-k_i}y_{t-1}^{k_iN-1}}{y_{t-2}^{k_i}}   +\frac{b_i y_t^{N-l_i}y_{t-1}^{l_iN-1}}{y_{t-2}^{l_i}}  \right\},
\end{equation}
where $N:=\sum_{i=1}^d (k_i+l_i)$, $y_{-4}=-x$, $y_{-3}=y_{-2}=y_{-1}=1$.
It is easy to see that $y_0=x+\sum_{i=1}^d (a_i+b_i)$.
\begin{claim}\label{deg_prop1}
In equation \eqref{y_eq},
$y_t$ $(t \ge 0)$ is a polynomial in $x$ whose constant term is nozero.
Moreover we have $y_{t-1}\,\big|\, y_t$ for every $t$.
\end{claim}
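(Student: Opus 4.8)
The plan is to prove all three assertions of Claim~\ref{deg_prop1} simultaneously by strong induction on $t\ge 0$: that $y_t$ is an honest polynomial in $x$ (with no negative powers), that its constant term is a nonzero element of $\Z[\{a_i,b_i\}]$, and that $y_{t-1}\mid y_t$. The base cases $y_{-4}=-x$, $y_{-3}=y_{-2}=y_{-1}=1$ and $y_0=x+\sum_i(a_i+b_i)$ are immediate, the latter already having nonzero constant term $\sum_i(a_i+b_i)$. Throughout I would work in the ring $\Z[\{a_i,b_i\}][x,x^{-1}]$, in which a priori $y_t$ is only a Laurent polynomial because the initial variable $f_{-4,\n}$ has been set to $-x$.

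For the Laurent and divisibility assertions I would exploit the divisibility chain $y_{t-3}\mid y_{t-2}\mid y_{t-1}\mid y_t$ supplied by the induction hypothesis. In \eqref{y_eq} the only denominators are powers of $y_{t-2}$; substituting $y_{t-2}\mid y_{t-1}$ and $y_{t-2}\mid y_t$ one checks that $y_{t-2}^N$ divides the numerator, and since each $y_s$ $(0\le s\le t)$ has nonzero constant term (so $x\nmid y_s$) no spurious negative powers of $x$ survive; hence $y_{t+1}\in\Z[\{a_i,b_i\}][x]$. For $y_t\mid y_{t+1}$ I would note that every summand on the right of \eqref{y_eq} carries a power of $y_t$ at least $N-\max_i[k_i,l_i]\ge 1$, so $y_t$ can be factored out. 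The one delicacy is that, because $y_{t-2}\mid y_t$, cancelling the denominator $y_{t-2}^N$ could in principle consume some of these factors of $y_t$; I would make the accounting airtight by comparing, for each irreducible $\pi\in\mR$, the $\pi$-adic valuations of numerator and denominator and confirming $v_\pi(y_{t+1})\ge v_\pi(y_t)$.

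The main obstacle is the non-vanishing of the constant term. Setting $x=0$, the constant term $\tilde y_t:=y_t|_{x=0}$ obeys the very same recurrence \eqref{y_eq} but with $\tilde y_{-4}=0$, and the task is to show $\tilde y_t\neq 0$ as a polynomial in the $a_i,b_i$. A naive degree count in $a,b$ is deceptive, because the top-degree terms cancel --- this is exactly the singularity-confinement cancellation reappearing --- so instead I would track the \emph{lowest}-degree homogeneous part. Writing $\lambda_t:=\operatorname{ord}_{(a,b)}\tilde y_t$ for the order at the ideal $(a_1,\dots,b_d)$, the recurrence induces a tropical (min-plus) recurrence for $\lambda_t$, and the lowest-degree part of $\tilde y_{t+1}$ is the sum of the lowest-degree parts of precisely those terms of \eqref{y_eq} that realise the minimum. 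Since each $a_i$- and $b_i$-term carries its own linear factor, distinct such terms feed distinct monomials and cannot cancel one another; the only conceivable destructive interference is between the $y_{t-3}$-term and the remaining terms.

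Consequently the crux is to verify that at the order realising the minimum no cancellation occurs --- equivalently, that the competing terms have strictly separated orders --- and this is exactly where the hypothesis \eqref{condition} re-enters, just as in the $t=3$ step: it forces $k_il_i-1$ to dominate $k_i,l_i$ and thereby controls the comparison of the $\lambda_t$. Granting this, the surviving lowest-degree coefficient of $\tilde y_{t+1}$ is a nonzero integer multiple of a product of the (inductively nonzero) lowest-degree coefficients of earlier $\tilde y_s$, so $\tilde y_{t+1}\neq 0$ and the induction closes. I expect this no-cancellation check for the tropical recurrence to be the genuinely delicate point, whereas the Laurent and divisibility statements should follow routinely from the valuation bookkeeping.
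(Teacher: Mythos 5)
Your proposal does not close the two points that actually carry the weight of the claim. The crux is the nonvanishing of the constant term, and there you reduce everything to a ``no cancellation at the lowest order in $(a_1,\dots,a_d,b_1,\dots,b_d)$'' assertion for the specialized sequence $y_t|_{x=0}$ and then write ``granting this''. That assertion is precisely the delicate part and it is not self-evident: the competing lowest-order contributions are not distinct monomials but products of powers of earlier iterates multiplied by $a_i$ or $b_i$, so they can very well share monomials and interfere, and you give no argument that the condition \eqref{condition} actually separates the relevant orders in your min-plus recurrence. The paper takes a different and much more concrete route here: it sets $x_t:=y_ty_{t-2}/y_{t-1}^N$, observes that $x_t$ satisfies the scalar nonlinear recurrence $x_{t+1}+x_{t-1}=\sum_i\bigl(a_ix_t^{-k_i}+b_ix_t^{-l_i}\bigr)$, specializes $a_1=a$ and all other parameters to $0$, and shows that the degree of $x_t$ in $a$ grows strictly, whence $x_t\neq 0$ and the constant term of $y_t$ is nonzero; polynomiality is then deduced from this. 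Your tropical scheme might conceivably be completed, but as written it defers exactly the step the claim hinges on.

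The divisibility $y_{t-1}\mid y_t$ is likewise not established by the proposed ``$\pi$-adic valuation bookkeeping'', because the term-by-term comparison you describe fails. For an irreducible $\pi$ dividing $y_{t-2}$, the summand $y_t^Ny_{t-3}/y_{t-2}^N$ contributes valuation $Nv_\pi(y_t)+v_\pi(y_{t-3})-Nv_\pi(y_{t-2})$, and since your own chain gives $v_\pi(y_{t-3})\le v_\pi(y_{t-2})$, this quantity can be smaller than $v_\pi(y_t)$ (for instance whenever $v_\pi(y_t)=v_\pi(y_{t-2})$ while $v_\pi(y_{t-3})<v_\pi(y_{t-2})$). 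Hence the lower bound ``$v_\pi(y_{t+1})\ge\min$ of the summands' valuations'' cannot confirm $v_\pi(y_{t+1})\ge v_\pi(y_t)$; to close this one must either exhibit the cancellations among summands or carry a stronger inductive statement (explicit low iterates show that earlier $y_s$ divide later ones to very high powers, which is the kind of information the naive chain $y_{t-3}\mid y_{t-2}\mid y_{t-1}\mid y_t$ does not retain). By contrast, your argument that $y_{t-2}^N$ divides the numerator, and hence that $y_{t+1}$ is a polynomial in $x$ with no negative powers, is sound once the chain and the nonzero constant terms are granted. But since the divisibility statement is exactly what the paper's coprimeness step later consumes, it cannot be waved through as routine.
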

\begin{proof}
The properties are shown by induction.
Let us define $x_t$ by
\[
x_t=\frac{y_t y_{t-2}}{y_{t-1}^N}
\]
for $t\ge -2$.
Then $x_t$ satisfies the following recurrence
\[
 x_{-2}=-x,\;x_{-1}=1,\; x_{t+1}+x_{t-1}=\sum_{i=1}^d \left( \frac{a_i}{x_t^{k_i}}+\frac{b_i}{x_t^{l_i}} \right) \ (t\ge -1).
\]
Let us show that $x_t\neq 0$ for every $t$.
It is sufficient to prove that $x_t\neq 0$ under a special case $a_1=a$, $a_j=0$ ($j \ne 1$), $b_i=0$.
Let us define $\gamma_t$ as the degree of $x_t$ as a rational function of $a$.
Then $\gamma_{-2}=0, \gamma_{-1}=1, \gamma_{0}=k_1, \gamma_{1}=k_1^2+1$,
and therefore we have
\[
\gamma_{t+1} \ge k_1 \gamma_t-\gamma_{t-1}-1,
\]
from which we inductively show that $\gamma_{t+1} \ge \gamma_t+1$ ($t \ge 0$).
Thus $x_t\neq 0$, which indicates that $y_t \neq 0$ when $x=0$.
Since $y_t$ has a nonzero constant term, the denominators of the RHS of \eqref{y_eq} do not contain monomial factors by dividing by $y_{t-2}$.
Thus $y_{t+1}$ must be a polynomial.
\end{proof}
\begin{claim}\label{deg_prop2}
Let $d_t$ be the degree $\Ord_x(y_t)$ of $y_t$ with respect to $x$.
Then $d_t$ satisfies the following recurrence
\begin{equation}\label{degree_count}
d_{t+1}=Nd_t-Nd_{t-2}+d_{t-3} \qquad (t \ge 0),
\end{equation}
where $d_{-3}=d_{-2}=d_{-1}=0,\,d_0=1$.
\end{claim}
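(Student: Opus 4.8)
The plan is to prove \eqref{degree_count} by induction on $t$, but the cleanest route runs through the auxiliary quantity $x_t = y_t y_{t-2}/y_{t-1}^N$ from the proof of Claim \ref{deg_prop1} rather than through \eqref{y_eq} directly. Writing $D_t := \Ord_x(x_t)$ and using that $\Ord_x$ is additive on products and sends $1/g$ to $-\Ord_x(g)$, the definition of $x_t$ gives the identity
\begin{equation*}
D_t = d_t + d_{t-2} - N d_{t-1}.
\end{equation*}
The key observation is that the target recurrence \eqref{degree_count} is \emph{equivalent} to the two-step periodicity $D_{t+1} = D_{t-1}$: subtracting the two instances of the displayed identity, the $d_{t-1}$ terms cancel and one is left with exactly $d_{t+1} = N d_t - N d_{t-2} + d_{t-3}$. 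Hence it suffices to compute $D_t$.

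First I would pin down the base values from the initial data: $x_{-2} = -x$ and $x_{-1} = 1$ give $D_{-2} = 1$, $D_{-1} = 0$, and $x_0 = x + \sum_{i=1}^d (a_i + b_i)$ gives $D_0 = 1$. I then claim that for all $t \ge -2$ one has $D_t = 1$ when $t$ is even and $D_t = 0$ when $t$ is odd; in particular $D_t \ge 0$ throughout. This is proved by induction using the discrete KdV recurrence $x_{t+1} = -x_{t-1} + \sum_{i=1}^d\left(a_i x_t^{-k_i} + b_i x_t^{-l_i}\right)$ established in the proof of Claim \ref{deg_prop1}. Since $k_i, l_i \ge 2$ and, by the inductive hypothesis, $\Ord_x(x_t) \ge 0$, every reciprocal term has nonpositive degree, so the sum has degree $\le 0$. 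A short case split then shows that $-x_{t-1}$ strictly dominates the sum: if $t$ is odd then $\Ord_x(x_t) = 0$ but $\Ord_x(x_{t-1}) = 1$, while if $t$ is even then the sum has strictly negative degree whereas $\Ord_x(x_{t-1}) = 0$. In either case no leading-term cancellation is possible, so $\Ord_x(x_{t+1}) = \Ord_x(x_{t-1})$, which simultaneously advances the alternating pattern and establishes $D_{t+1} = D_{t-1}$.

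Combining the periodicity with the identity for $D_t$ yields \eqref{degree_count} for $t \ge 0$, while the stated initial values $d_{-3} = d_{-2} = d_{-1} = 0$ and $d_0 = 1$ are read off directly from $y_{-3} = y_{-2} = y_{-1} = 1$ and $y_0 = x + \sum_{i=1}^d (a_i + b_i)$.

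I expect the only genuine subtlety to be the degree bookkeeping guaranteeing the strict domination of $-x_{t-1}$ over the reciprocal-power terms; this is exactly where the nonnegativity $D_t \ge 0$ (underwritten by $y_{t-1} \mid y_t$ and the nonvanishing constant term from Claim \ref{deg_prop1}) and the parity alternation must be carried through the induction together. If one preferred to argue directly from \eqref{y_eq}, the same input reappears: after clearing denominators, comparing the degree $N d_t + d_{t-3}$ of the leading monomial $-y_t^N y_{t-3}$ against the degree of the $i$-th reciprocal term, the difference simplifies to $k_i D_t + D_{t-1}$ (respectively $l_i D_t + D_{t-1}$), which is positive precisely because the $D$'s alternate between $0$ and $1$; that route is logically equivalent but heavier in computation.
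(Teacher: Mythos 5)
Your proof is correct and is essentially the paper's own argument in different clothing: the paper likewise reduces \eqref{degree_count} to showing that $d_t-Nd_{t-1}+d_{t-2}$ (which is exactly your $D_t=\Ord_x(x_t)$) alternates between $1$ and $0$, and proves this by the same leading-term dominance induction, computing the degree gap as $k_iD_s+D_{s-1}\ge 1$ directly from \eqref{y_eq}. Running the bookkeeping through the nonlinear variable $x_t$ from Claim \ref{deg_prop1} instead of through \eqref{y_eq} is, as you note yourself, only a cosmetic repackaging, so the two proofs coincide in substance.
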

\begin{proof}
Note that $y_{-4}=x, y_{-3}=y_{-2}=y_{-1}=1, y_0=x+N$.
The statement is true for $t=0,\,1,\,2$ since $d_1=N,\,d_2=N^2,\,d_3=N^3-N$.
%
Since \eqref{degree_count} is equivalent to
\begin{equation}
d_{t+2}-Nd_{t+1}+d_t=d_t-Nd_{t-1}+d_{t-2}, \label{dtrecurrence}
\end{equation}
it is sufficient to prove the following:
\[
d_{t}-Nd_{t-1}+d_{t-2}=\left\{
\begin{array}{cl}
1 & \quad (t:\,\mbox{even})\\
0 & \quad (t:\,\mbox{odd})
\end{array}
\right.
.
\]
Proof is done by induction.
Let us assume the statement up to $t=s$ (for a fixed $s \ge 1$)  and study the case of
$t=s+1$: we have
\begin{align*}
\Ord_x\left(\frac{y_s^Ny_{s-3}}{y_{s-2}^N} \right)&=Nd_s-Nd_{s-2}+d_{s-3},\\
\Ord_x\left(\frac{ y_s^{N-k_i}y_{s-1}^{k_iN-1}}{y_{s-2}^{k_i}}  \right)&
= (N-k_i)d_s+(k_iN-1)d_{s-1}-k_id_{s-2} ,
\end{align*}
\begin{align*}
&Nd_s-Nd_{s-2}+d_{s-3}-\left\{  (N-k_i)d_s+(k_iN-1)d_{s-1}-k_id_{s-2} \right\}\\
&=k_i(d_s-Nd_{s-1}+d_{s-2})+(d_{s-1}-Nd_{s-2}+d_{s-3}) \ge 1,
\end{align*}
from the induction hypothesis.
Therefore we have
\[
\Ord_x\left(\frac{y_s^Ny_{s-3}}{y_{s-2}^N} \right)>\Ord_x\left(\frac{ y_s^{N-k_i}y_{s-1}^{k_iN-1}}{y_{s-2}^{k_i}}  \right).
\]
Similarly we have
\[
\Ord_x\left(\frac{y_s^Ny_{s-3}}{y_{s-2}^N} \right)>\Ord_x\left(\frac{ y_s^{N-l_i}y_{s-1}^{l_iN-1}}{y_{s-2}^{l_i}}  \right).
\]
Thus the first term in the RHS of \eqref{y_eq} has the largest degree.
Therefore we have
$d_{s+1}=Nd_s-Nd_{s-2}+d_{s-3}$.
\end{proof}
\end{step}

\begin{step}[coprimeness from the irreducibility]\label{mutually_coprime_prop}
We show that the two iterates $f_{t,\n}$ and $f_{s,\m}$ with $(t,\n)\neq (s,\m)$ are coprime with each other if they are both irreducible. 

From Claims \ref{deg_prop1} and \ref{deg_prop2}, it is shown that, if $f_{t,\n}$ and $f_{s, \m}$ ($s > t$) are both irreducible, they must be coprime with each other.
Let us suppose otherwise.
We have $y_s\, |\, y_t$ from Claim \ref{deg_prop1}. From the irreducibility  we have $y_t=x^\alpha y_s$ for some integer $\alpha$.
On the other hand, the constant term of $y_t$ is nonzero from Claim \ref{deg_prop1}, thus
$\alpha=0$.
Therefore we have $y_t=y_s$, which in turn contradicts Claim \ref{deg_prop2}.
Next in the case of $t=s$, $f_{t,\n}$ and $f_{t,\m}$ ($\n \ne \m$) are pairwise coprime if they are both irreducible, which is immediate from the fact that
they both have terms of $f_{-4,\n'}$ for some $\n'$ which cannot be cancelled out
by factoring some monomials.
\end{step}

\begin{step}[irreducibility for $t=3$]
The irreducibility for $t \le 2$ is already proved.
Now let us study the case of $t = 3$. We have the factorization
\[
f_{3,\n}=\left(\prod_{\m\in\mathbb{Z}^d} f_{0,\m}^{\alpha_{\m}} \right)f_{3,\n}'\quad (\alpha_{\m} \in \Z_{\ge 0}),
\]
where $f_{3,\n}'$ is irreducible.
In a similar manner to the discussion in previous paragraphs, let us prove $\alpha_{\m}=0$ for all $\m\in\mathbb{Z}^d$.
From the evolution equation it is sufficient to prove $\alpha_{\m} \ne 0$ only when
\[
\n=\m\pm\e_i\pm\e_j\pm\e_p\ (i,j,p\in\{1,2,\cdots,d\}).
\]
We can take $\m=\b0$ without losing generality.
Let us take the initial values $f_{-4,\b0}=-x+\sum_{i=1}^{d}(a_i+b_i)$, $f_{-4,\m}=-1+\sum_{i=1}^{d}(a_i+b_i)$ ($\m \ne \b0$),
and $f_{-3,\m}=f_{-2,\m}=f_{-1,\m}=1$ for all $\m$.
Then we have $f_{0,\b0}=x, f_{0,\m}=1 \;\;(m \ne \b0)$, and
\[
f_{1,\n}=-F_{0,\n}+\sum_{i=1}^d (a_iG_{-1,\n}^{(+i)}+b_iG_{-1,\n}^{(-i)})=-F_{0,\n}+\sum_{i=1}^d( a_iF_{0,\n}^{(+i)}+b_iF_{0,\n}^{(-i)}),
\]
where
\[
F_{0,\e_i}=x^{l_i},\; F_{0,-\e_i}=x^{k_i},\; F_{0,\e_i}^{(-i)}=1,\; F_{0,-\e_i}^{(+i)}=1,\; F_{0,\pm\e_i}^{(\pm j)}=F_{0,\pm \e_i}
\;(\mbox{otherwise}).
\]
Thus
\[
f_{1,\e_i}=b_i +(A-b_i)x^{l_i},\quad f_{1,-\e_i}=a_i+(A-a_i)x^{k_i}, 
\quad f_{1, \m}=A \;\;(\m \ne \pm\e_i),
\]
where we have defined $A:=\sum_{i=1}^d(a_i+b_i)-1$.
Next let us compute
\[
f_{2,\n}=-F_{1,\n}+\sum_{i=1}^d\left\{ \frac{a_iF_{1,\n}F_{0,\n+\e_i}^{k_i}}{f_{0,\n}f_{1,\n+\e_i}^{k_i}}
+\frac{b_i F_{1,\n}F_{0,\n-\e_i}^{l_i}}{f_{0,\n}f_{1,\n-\e_i}^{l_i}}\right\}.
\]
First, we have $F_{1,\m}=A^N$ for $\m\neq \pm \e_i \pm \e_j$ $(1\le i,j\le d)$
and in other cases $F_{1,\m}$ is a polynomial in $x$ as
\begin{align*}
F_{1,\b0}&=\prod_{i=1}^d(b_i +(A-b_i)x^{l_i})^{k_i}(a_i+(A-a_i)x^{k_i})^{l_i},\\
F_{1,2\e_i}&=A^{N-l_i} (b_i+(A-b_i)x^{l_i})^{l_i},\ F_{1,-2\e_i}=A^{N-k_i} (a_i+(A-a_i)x^{k_i})^{k_i},\\
F_{1,\e_i+\e_j}&=A^{N-l_i-l_j} (b_i+(A-b_i)x^{l_i})^{l_j}(b_j+(A-b_j)x^{l_j})^{l_i}\ (i\neq j),\\
F_{1,-\e_i-\e_j}&=A^{N-k_i-k_j} (a_i+(A-a_i)x^{k_i})^{k_j} (a_j+(A-a_j)x^{k_j})^{k_i}\ (i\neq j),\\
F_{1,\e_i-\e_j}&=A^{N-l_i-k_j} (b_i+(A-b_i)x^{l_i})^{k_j}(a_j+(A-a_j)x^{k_j})^{l_i}\ (i\neq j),
\end{align*}
where we have used $N:=\sum_{i=1}^d (k_i+l_i)$.
From here on let us substitute $A=1$ and use the notations $a_i(x):=a_i+(1-a_i)x^{k_i}$, $b_i(x):=b_i+(1-b_i)x^{l_i}$, since
if we prove $f_{3,\m}\neq 0$ for a particular $A$, then it is trivial that $f_{3,\m}\neq 0$ for an indeterminant $A$.
When $A=1$, we have $f_{1,\e_i}=b_i(x), f_{1,-\e_i}=a_i(x), f_{1,\m}=1 \; (\m \ne \pm \e_i)$
and
\begin{align*}
&F_{1,\b0}=\prod_{i=1}^d b_i (x)^{k_i}a_i(x)^{l_i},\;
F_{1,2\e_i}=b_i(x)^{l_i},\;
F_{1,-2\e_i}=a_i(x)^{k_i},\;
F_{1,\e_i+\e_j}=b_i(x)^{l_j}b_j(x)^{l_i},\\
&F_{1,-\e_i-\e_j}= a_i(x)^{k_j} a_j(x)^{k_i},\;
F_{1,\e_i-\e_j}=b_i(x)^{k_j}a_j(x)^{l_i},\;
F_{1,\m}=1 \;(\mbox{otherwise}).
\end{align*}
Then we obtain $f_{2,\m}=1$ for $\m\neq \pm \e_i \pm \e_j$ $(1\le i,j\le d)$
and in other cases we have
\begin{align*}
f_{2,\b0}&=\left(\prod_{i=1}^db_i (x)^{k_i}a_i(x)^{l_i}\right) \left\{-1+\sum_{i=1}^d \left( \frac{a_ix^{k_il_i-1}}{b_i(x)^{k_i}}
+  \frac{b_i x^{k_il_i-1}}{a_i(x)^{l_i}} \right)\right\},\\
f_{2,2\e_i}&=b_i(x)^{l_i}\left\{1+b_i\left(\frac{x^{l_i^2}}{b_i(x)^{l_i}}-1\right)       \right\},\ 
f_{2,-2\e_i}=a_i(x)^{k_i}\left\{1+a_i\left(\frac{x^{k_i^2}}{a_i(x)^{k_i}}-1\right)       \right\},\\
f_{2,\e_i+\e_j}&=b_i(x)^{l_j}b_j(x)^{l_i}\left\{1+ b_i\left(\frac{x^{l_il_j}}{(b_j (x))^{l_i}}-1\right)      +b_j\left(\frac{x^{l_il_j}}{(b_i (x))^{l_j}}-1\right)      \right\}\ (i\neq j),\\
f_{2,-\e_i-\e_j}&= a_i(x)^{k_j} a_j(x)^{k_i}\left\{1+ a_i\left(\frac{x^{k_ik_j}}{a_j(x)^{k_i}}-1\right)  +a_j\left(\frac{x^{k_ik_j}}{(a_i (x))^{k_j}}-1\right)                                \right\}\ (i\neq j),\\
f_{2,\e_i-\e_j}&=b_i(x)^{k_j}a_j(x)^{l_i}\left\{1+ b_i\left(\frac{x^{l_ik_j}}{a_j(x)^{l_i}}-1\right)      +a_j\left(\frac{x^{l_ik_j}}{(b_i(x))^{k_j}}-1\right)      \right\}\ (i\neq j).
\end{align*}
%
%
%
%
%

Finally let us compute the value of $f_{3,\n}$ when we substitute the values above in
the initial variables.
Recall that \[
f_{3,\n}=-\frac{F_{2,\n}}{F_{0,\n}}+\sum_{i=1}^d \left\{\frac{a_iF_{2,\n}F_{1,\n+\e_i}^{k_i}}{f_{1,\n}f_{2,\n+\e_i}^{k_i}f_{0,\n+\e_i}^{k_i}}+\frac{b_iF_{2,\n}F_{1,\n-\e_i}^{l_i}}{f_{1,\n}f_{2,\n-\e_i}^{l_i}f_{0,\n-\e_i}^{l_i}}\right\}.
\]
We shall prove that $f_{3,\n}|_{x\to 0} \neq 0$, from which $\alpha_{\boldsymbol{n}}=0$ is obtained.
Let us study the cases of $\n=\pm \e_i\pm \e_j \pm \e_p$ one by one.

\begin{itemize}
\item The case of $\n=\pm\e_i$:

It is sufficient to prove the case of $\n=\e_1$.
Since
\[
F_{2,\e_1}=f_{2,\b0}f_{2,2\e_1}\prod_{i=2}^d f_{2,\e_1+\e_i}f_{2,\e_1-\e_i},
\]
we have
\begin{equation}
f_{3,\e_1}=F_{2,\e_1}\left[ -\frac{1}{x^{l_1}}+\sum_{i=1}^d    
\left\{\frac{a_iF_{1,\e_1+\e_i}^{k_i}}{f_{1,\e_1}f_{2,\e_1+\e_i}^{k_i}f_{0,\e_1+\e_i}^{k_i}}+\frac{b_iF_{1,\e_1-\e_i}^{l_i}}{f_{1,\e_1}f_{2,\e_1-\e_i}^{l_i}f_{0,\e_1-\e_i}^{l_i}}\right\}   \right] \label{f3e1expression}.
\end{equation}
Let us prove that \eqref{f3e1expression} converges to a non-zero value for $x\to 0$.
By taking the second term inside $\sum_{i=1}^d\{\,\}$ when $i=1$, we observe that the divergent term $x^{-l_1}$ is cancelled out:
\begin{align}
&-\frac{1}{x^{l_1}}+\frac{b_1F_{1,\b0}^{l_1}}{f_{1,\e_1}f_{2,\b0}^{l_1}f_{0,\b0}^{l_1}}=\frac{1}{x^{l_1}}\left( -1+ \frac{b_1F_{1,\b0}^{l_1}}{b_1(x)f_{2,\b0}^{l_1}}\right)\notag\\
&=\frac{1}{x^{l_1}}\left[ \frac{b_1-b_1(x)\left\{-1+\sum_{i=1}^d \left( \frac{a_ix^{k_il_i-1}}{b_i(x)^{k_i}}
+  \frac{b_i x^{k_il_i-1}}{a_i(x)^{l_i}} \right)\right\}^{l_1}}{b_1(x)\left\{-1+\sum_{i=1}^d \left( \frac{a_ix^{k_il_i-1}}{b_i(x)^{k_i}}
+  \frac{b_i x^{k_il_i-1}}{a_i(x)^{l_i}} \right)\right\}^{l_1}}\right]
\label{mideq_1}.
\end{align}
From $b_1(x)=b_1+(1-b_1)x^{l_1}$ and \eqref{condition}, the RHS of
\eqref{mideq_1} converges to $(b_1-1)/b_1$ when $x\to 0$．
All the other terms in \eqref{f3e1expression} converges to finite values:
\begin{align*}
&f_{1,\e_i}\, \rightarrow \, b_i, \;
f_{1,-\e_i}\, \rightarrow \, a_i, \;
f_{1,\m} = 1 \;\; (\m \ne \pm \e_i), \\
&F_{1,2\e_i}\,\rightarrow \, b_i^{l_i},\;
F_{1,-2\e_i}\,\rightarrow \, a_i^{k_i},\;
F_{1,\e_i+\e_j}\,\rightarrow \, b_j^{l_i}b_i^{l_j},\;
F_{1,-\e_i-\e_j}\,\rightarrow \, a_j^{k_i}a_i^{k_j},\;
F_{1,\e_i-\e_j}\,\rightarrow \, a_j^{l_i}b_i^{k_j},\\
&f_{2,\b0}\, \rightarrow \, -\prod_{i=1}^d b_i^{k_i}a_i^{l_i}, \;
f_{2,2\e_i} \,\rightarrow \,b_i^{l_i}\left(1-b_i\right),  \;
f_{2,-2\e_i} \, \rightarrow \,a_i^{k_i}\left(1-a_i\right), \\
&f_{2,\e_i+\e_j}\, \rightarrow \,b_i^{l_j}b_j^{l_i}\left(1- b_i-b_j\right), \;
f_{2,-\e_i-\e_j} \,\rightarrow \, a_i^{k_j} a_j^{k_i}\left(1- a_i-a_j\right),\;
f_{2,\e_i-\e_j} \, \rightarrow \,b_i^{k_j}a_j^{l_i}\left(1- b_i-a_j\right),\\
&f_{2,\m}=1\qquad (\mbox{otherwise}).
\end{align*}
Therefore $\overline{F}_{2,\e_1}:=F_{2,\e_1}\big|_{x\to 0} \neq 0$ (from here on we apply an overline to the variables to denote the substitution of $x\to 0$: e.g., $\overline{F}_*=F_{*} |_{x\to 0}$), and we have
\[
\overline{f}_{3,\e_1}=\overline{F}_{2,\e_1}\left[ \frac{b_1-1}{b_1}+\frac{a_1}{b_1(1-b_1)^{k_1}}+\sum_{i=2}^d \left( \frac{a_i}{b_1(1-b_1-b_i)^{k_i}}+\frac{b_i}{b_1(1-b_1-a_i)^{l_i}}   \right)  \right] \neq 0.
\]
\item The case of $\n=\pm 3\e_i$: it is sufficient to investigate the case of $\n=3\e_1$.
Since
\begin{align*}
f_{3,3\e_1}&=-\frac{F_{2,3\e_1}}{F_{0,3\e_1}}+\sum_{i=1}^d \left\{\frac{a_iF_{2,3\e_1}F_{1,3\e_1+\e_i}^{k_i}}{f_{1,3\e_1}f_{2,3\e_1+\e_i}^{k_i}f_{0,3\e_1+\e_i}^{k_i}}+\frac{b_iF_{2,3\e_1}F_{1,3\e_1-\e_i}^{l_i}}{f_{1,3\e_1}f_{2,3\e_1-\e_i}^{l_i}f_{0,3\e_1-\e_i}^{l_i}}\right\}\\
&=F_{2,3\e_1}\left[\frac{b_1F_{1,2\e_1}^{l_1}}{f_{2,2\e_1}^{l_1}} +A-b_1    \right],
\end{align*}
by taking $A=1$ we have
\[
\of_{3,3\e_1}=\oF_{2,3\e_1}\left( 1-b_1+\frac{b_1}{(1-b_1)^{l_1}}\right) \ne 0.
\]

\item The case of $\n=\pm 2\e_i \pm \e_j$ ($i \ne j$): it is sufficient to prove the case of $\n=2\e_1+\e_2$.
Since
\begin{align*}
f_{3,2\e_1+\e_2}&=-\frac{F_{2,2\e_1+\e_2}}{F_{0,2\e_1+\e_2}}+\sum_{i=1}^d \left\{\frac{a_iF_{2,2\e_1+\e_2}F_{1,2\e_1+\e_2+\e_i}^{k_i}}{f_{1,2\e_1+\e_2}f_{2,2\e_1+\e_2+\e_i}^{k_i}f_{0,2\e_1+\e_2+\e_i}^{k_i}}+\frac{b_iF_{2,2\e_1+\e_2}F_{1,2\e_1+\e_2-\e_i}^{l_i}}{f_{1,2\e_1+\e_2}f_{2,2\e_1+\e_2-\e_i}^{l_i}f_{0,2\e_1+\e_2-\e_i}^{l_i}}\right\}\\
&=F_{2,2\e_1+\e_2}\left[ A-b_1-b_2+\frac{b_1F_{1,\e_1+\e_2}^{l_1}}{f_{2,\e_1+\e_2}^{l_1}}+\frac{b_2F_{1,2\e_1}^{l_2}}{f_{2,2\e_1}^{l_2}}\right],
\end{align*}
we have
\[
\of_{3,2\e_1+\e_2}=\oF_{2,2\e_1+\e_2}\left( 1-b_1-b_2+\frac{b_1}{(1-b_1-b_2)^{l_1}}
+\frac{b_2}{(1-b_1)^{l_2}} \right) \ne 0.
\]

\item The case of $\n=\pm\e_i \pm \e_j \pm \e_p$ ($i \ne j \ne p \ne i$):
A direct calculation shows that
\[
\of_{3,\e_1+\e_2+\e_3}=\oF_{2,\e_1+\e_2+\e_3}\left( 1-b_1-b_2-b_3+\frac{b_1}{(1-b_2-b_3)^{l_1}}
+\frac{b_2}{(1-b_1-b_3)^{l_2}}+\frac{b_3}{(1-b_1-b_2)^{l_3}}  \right) \ne 0.
\]
\end{itemize}
\end{step}

\begin{step}[irreducibility for $t=4$]
Let us take the same initial condition as in the case of $t=3$ and let us assume that $A=1$.
From a discussion similar to the previous cases we have only to prove that $\of_{4,\n} \ne 0$.
Moreover, it is sufficient to prove $\of_{4,\n} \ne 0$ in the case of $\n=\pm \e_i \pm \e_j \pm \e_p \pm \e_q$.

In the case of $\n=\b0$ we have
\[
\of_{4,\b0}=\sum_{i=1}^d \left\{ \frac{a_i\oF_{3,\b0}\oF_{2,\e_i}^{k_i}}{\of_{2,\b0}\of_{3,\e_i}^{k_i}\of_{1,\e_i}^{k_i}} +\frac{b_i\oF_{3,\b0}\oF_{2,-\e_i}^{l_i}}{\of_{2,\b0}\of_{3,-\e_i}^{l_i}\of_{1,-\e_i}^{l_i}}\right\},
\]
each term of which is negative for indeterminates $a_i,\,b_i >0$ since $\of_{2,\b0}<0$.
Therefore $\of_{4,\b0} \ne 0$.
In the case of $\n \ne \b0$ we have
\[
\of_{4,\n}=\frac{\oF_{3,\n}}{\of_{2,\n}} \left[-\frac{\of_{2,\n}}{\oF_{1,\n}}
+\sum_{i=1}^d\left\{ \frac{a_i\oF_{2,\n+\e_i}^{k_i}}{\of_{3,\n+\e_i}^{k_i}\of_{1,\n+\e_i}^{k_i}}
+ \frac{b_i \oF_{2,\n-\e_i}^{l_i}}{\of_{3,\n-\e_i}^{l_i}\of_{1,\n-\e_i}^{l_i}}\right\}
 \right].
\]
\begin{itemize}
\item In the case of $\n=\pm 2\e_i$, it is sufficient to investigate the case of $\n=2\e_1$.
In this case
\[
\frac{\of_{2,2\e_1}}{\oF_{1,2\e_1}}=1-b_1,
\]
which does not depend on $\{a_i\}$.
Calculating further we have that all the terms in $\sum_{i=1}^d$ is dependent on $\{a_i\}$ and is not cancelled out by $1-b_1$: for example, 
\[
\frac{a_i\oF_{2,2\e_1+\e_i}^{k_i}}{\of_{3,2\e_1+\e_i}^{k_i}\of_{1,2\e_1+\e_i}^{k_i}}=\frac{a_i}{\left( 1-b_1-b_i+\frac{b_1}{(1-b_1-b_i)^{l_1}}
+\frac{b_i}{(1-b_1)^{l_i}}  \right)^{k_i}}\quad (i\neq 1).
\]
Therefore $\of_{4,2\e_1}$ cannot be zero.
\item In the case of $\n=\pm \e_i \pm \e_j$, it is sufficient to prove the case of $\n=\e_1+\e_2$. We have
\[
\frac{\of_{2,\e_1+\e_2}}{\oF_{1,\e_1+\e_2}}=1-b_1-b_2,
\]
which is independent of $\{a_i\}$.
The sum of other terms is dependent on $\{a_i\}$.
Thus $\of_{4,\pm \e_1 \pm \e_2} \ne 0$.
\item In other cases we have $\frac{\of_{2,\n}}{\oF_{1,\n}}=1$.
Other terms explicitly depend on the parameters $\{a_i,\,b_i\}$.
Thus $f_{4,\n} \ne 0$．
\end{itemize}
\end{step}

\begin{step}[irreducibility for $t\ge 5$]
In the case of $t=5$ we have the following two factorizations:
\[
f_{5,\n}=\left(\prod_{\m}f_{0,\m}^{\alpha_{\m}} \right)f_{5,\n}'=
\left(\prod_{r=1}^4\prod_{\m}f_{r,\m}^{\beta_{\m}^{(r)}} \right)f_{5,\n}''
\quad (\alpha_{\m},\,\beta_{\m}^{(r)} \in \Z_{\ge 0},\, f_{5,\n}',f_{5,\n}''\; (\mbox{irreducible})).
\]
Suppose that $f_{5,\n}$ is {\em not} irreducible.
Since $f_{t,\n}$ ($t \le 4$)  is an irreducible non-unit, only the following type of factorization is possible:
\[
f_{5,\n}=u f_{0,\m_0}f_{r,\m_r} \quad (r \in \{1,2,3,4\}),
\]
where $u$ is a unit element.
Now let us recall the discussion in Claims \ref{deg_prop1} and \ref{deg_prop2}.
There exists $r \in \{1,2,3,4\}$ such that $y_5=y_0y_r$.
Therefore $d_5=d_r+d_0$.
On the other hand, we have
$d_{-1}=0, d_0=1$ and
\[
d_{t+1}=Nd_t-d_{t-1}+\frac{1-(-1)^t}{2}.
\]
Thus $d_{t+1} \ge (N-1)d_{t}$ ($t \ge 0$) and therefore
$d_r+d_0 \le d_4+d_0<d_5$,
which is a contradiction.
Thus $f_{t,\n}$ must be irreducible.
The case of $t \ge 6$ is proved in the same manner since $d_t>d_0+d_r$ ($1\le r\le t-1$).

The proof of the irreducibility is now completed and thus the proof of Theorem \ref{th-exdKdV_f} is finished.
\end{step}
%
%

%
%
%

\section{Conclusion}
In this paper, we introduced the equation \eqref{pedKdV_eq} as an example of a coprimeness-preserving non-integrable equation defined over the integer lattice of arbitrary dimension.
The equation \eqref{pedKdV_eq} is a higher dimensional equation of the Hietarinta-Viallet type: i.e., confining but having exponential degree growth.
We investigated the generalized tau-function form \eqref{poly-form} of \eqref{pedKdV_eq}
and showed its Laurent, the irreducibility and the coprimeness properties.
The proof requires heavy calculation but relies only on elementary facts on the factorization of Laurent polynomials.
The equation \eqref{pedKdV_eq} gives several known coprimeness-preserving equations including the coprimeness-preserving non-integrable extension to the disrete KdV equation \eqref{eq:nonlinear} and its generalized tau-function form \eqref{eq:laurent}, whose properties are reviewed in the Appendix.

A reduction of a discrete system gives an equation on a lower dimensional lattice.
It is expected to obtain equations with interesting properties (such as the Laurent property) by reductions from \eqref{pedKdV_eq}.
For example,
a reduction to a one-dimensional lattice of the equation
\eqref{eq:nonlinear} has the coprimeness property and its algebraic entropy can be derived using this property \cite{RIMSentropy}.
Regarding the Laurent property, it is known that if an equation satisfies the Laurent property
for  arbitrary ``good'' domains, it preserves the Laurent property
under reduction \cite{investigation}.
In this paper, however, we proved the Laurent property of \eqref{poly-form} only on a specific domain.
We have some results on problems arising from the domain of definition, which we wish to present in future works.

\section*{Acknowledgment}

We thank Prof.~R.~Willox for valuable comments.
This work is partially supported by KAKENHI grant numbers 17K14211, 18H01127 and 18K13438.

\appendix


\section{Factorization of Laurent polynomials}

We introduce Lemma \ref{lemma-common}, which characterizes the factorization of Laurent polynomials under the transformation of variables.
The statement was first introduced in \cite{coprime} when the ring of coefficients is $R=\mathbb{Z}$,
and later, $\mathbb{Z}$ is extended to an arbitrary unique factorization domain (UFD) $R$ as follows:
\begin{lemma}[\cite{NonlinToda}] \label{lemma-common}
Let $R$ be a UFD.
Suppose that we have a bijective mapping between $\boldsymbol{p}:=\{p_1,p_2,...,p_N\}$ and $\boldsymbol{q}:=\{q_1,q_2,...,q_N\}$ such that $\boldsymbol{q} \subset R[\boldsymbol{p}^\pm]$, $\boldsymbol{p} \subset R[\boldsymbol{q}^\pm]$,
and that each $q_i$  is irreducible in $R[\boldsymbol{p}^{\pm}]$ for $j=1,2,\cdots, N$.
Here $\boldsymbol{p}^\pm$ denotes $\{p_1^{\pm}, p_2^{\pm},\cdots ,p_N^{\pm}\}$ and so on.
Let us take a Laurent polynomial $f(\boldsymbol{q}) \in R[\boldsymbol{q}^\pm]$ which
is irreducible in $R[\boldsymbol{p}^\pm]:$ i.e., $\tilde{f}(\boldsymbol{p}):=f(\boldsymbol{q}(\boldsymbol{p}))\in R[\boldsymbol{p}^\pm]$ is irreducible.
Then $f(\boldsymbol{q})$ admits the following factorization in $R[\boldsymbol{q}^\pm]$:
\[
f(\boldsymbol{q})=\boldsymbol{p}(\boldsymbol{q})^{\boldsymbol{\alpha}} \tilde{f}(\boldsymbol{q}) \qquad (\boldsymbol{\alpha}\in \Z^N,\, \tilde{f}\ \mbox{is irreducible in} \ R[\boldsymbol{q}^\pm]).
\]
\end{lemma}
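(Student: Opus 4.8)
The plan is to treat $R[\boldsymbol{p}^{\pm}]$ and $R[\boldsymbol{q}^{\pm}]$ as two unique factorization domains sitting inside their common fraction field $K=\mathrm{Frac}(R)(\boldsymbol{p})=\mathrm{Frac}(R)(\boldsymbol{q})$, and to compare the factorization of one and the same element in the two rings. First I would record two standard facts: that a Laurent polynomial ring over a UFD is again a UFD, and that its units are precisely $R^{\times}$ times Laurent monomials in the variables. I set $\theta:=f(\boldsymbol{q})=\tilde{f}(\boldsymbol{p})$, regarded as a single element of $K$ that lies in both rings; by hypothesis $\theta$ is irreducible (prime) in $R[\boldsymbol{p}^{\pm}]$.

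The key device is localization. Let $S$ be the multiplicative monoid generated by the elements $q_{1},\dots,q_{N}$, which are prime in $R[\boldsymbol{p}^{\pm}]$ by assumption. Then $S^{-1}R[\boldsymbol{p}^{\pm}]$ is again a UFD; its prime elements are those primes of $R[\boldsymbol{p}^{\pm}]$ not associate to any $q_{i}$, and its units are exactly $R^{\times}$ times a monomial in $\boldsymbol{p}$ times a monomial in the $q_{i}$. I would then check the inclusion $R[\boldsymbol{q}^{\pm}]\subseteq S^{-1}R[\boldsymbol{p}^{\pm}]$, which is immediate since each $q_{i}$ lies in $R[\boldsymbol{p}^{\pm}]$ and each $q_{i}^{-1}$ has been inverted. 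Assuming, as the main case, that $\theta$ is not associate to any $q_{i}$, the element $\theta$ remains irreducible in the larger ring $S^{-1}R[\boldsymbol{p}^{\pm}]$.

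Now I would transport the factorization. Write the prime factorization $\theta=v\prod_{k}g_{k}^{e_{k}}$ in $R[\boldsymbol{q}^{\pm}]$, with $v$ a unit and the $g_{k}$ distinct primes, and read it as an identity inside $S^{-1}R[\boldsymbol{p}^{\pm}]$. Since $\theta$ is irreducible there, exactly one factor $g_{k_{0}}$ survives as a non-unit (necessarily with $e_{k_{0}}=1$), while all remaining $g_{k}$ together with $v$ multiply to an element $u'$ that is a unit of $S^{-1}R[\boldsymbol{p}^{\pm}]$ yet lies in $R[\boldsymbol{q}^{\pm}]$. By the description of the units, $u'$ equals, as an element of $K$, a scalar in $R^{\times}$ times a $\boldsymbol{p}$-monomial $\boldsymbol{p}^{\boldsymbol{\alpha}}$ with $\boldsymbol{\alpha}\in\Z^{N}$ times a $\boldsymbol{q}$-monomial. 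Absorbing the scalar and the $\boldsymbol{q}$-monomial, both of which are units of $R[\boldsymbol{q}^{\pm}]$, into $g_{k_{0}}$ produces an irreducible element of $R[\boldsymbol{q}^{\pm}]$, which I take to be $\tilde{f}(\boldsymbol{q})$, and leaves $f(\boldsymbol{q})=\boldsymbol{p}(\boldsymbol{q})^{\boldsymbol{\alpha}}\tilde{f}(\boldsymbol{q})$ with $\boldsymbol{\alpha}\in\Z^{N}$, as required.

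The hard part will be the localization bookkeeping: verifying that the leftover factor $u'$ is genuinely a unit of $S^{-1}R[\boldsymbol{p}^{\pm}]$ and pinning down its shape as scalar times $\boldsymbol{p}$-monomial times $\boldsymbol{q}$-monomial, which is where both the hypothesis that every $q_{i}$ is prime in $R[\boldsymbol{p}^{\pm}]$ and the explicit form of the units of a Laurent ring enter decisively. I would also have to dispose of the degenerate case in which $\theta$ is associate to some $q_{i}$: there $f(\boldsymbol{q})$ is already a $\boldsymbol{p}$-monomial up to units of $R[\boldsymbol{q}^{\pm}]$, so the claimed factorization holds with no genuine irreducible part. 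A final, cosmetic point is to match the surviving prime $\tilde{f}(\boldsymbol{q})$ with the substituted polynomial appearing in the statement; since the renaming $\boldsymbol{p}\mapsto\boldsymbol{q}$ carries the prime $\theta$ of $R[\boldsymbol{p}^{\pm}]$ to a prime of $R[\boldsymbol{q}^{\pm}]$, the two agree up to a unit, which is all that is needed for the coprimeness applications.
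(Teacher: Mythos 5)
Your core route is sound, and it is essentially the intended one: the paper itself contains no proof of Lemma~\ref{lemma-common} (it is imported from \cite{NonlinToda}), but the localization device it records alongside it as Lemma~\ref{lemma-common-ufd} is exactly what you deploy. Your main case checks out in full: $R[\boldsymbol{q}^\pm]\subseteq S^{-1}R[\boldsymbol{p}^\pm]$ with $S$ generated by the primes $q_1,\dots,q_N$; the units of this localization are precisely the elements $r\,\boldsymbol{p}^{\boldsymbol{m}}\boldsymbol{q}^{\boldsymbol{s}}$ with $r\in R^\times$, $\boldsymbol{m},\boldsymbol{s}\in\Z^N$; when $\theta$ is not associate to any $q_i$ it stays prime in the localization, so in the factorization $\theta=v\prod_k g_k^{e_k}$ taken in $R[\boldsymbol{q}^\pm]$ exactly one $g_{k_0}$ survives with $e_{k_0}=1$, the leftover $u'$ is a unit of the localization of the stated shape, and absorbing $r\boldsymbol{q}^{\boldsymbol{s}}$ into $g_{k_0}$ yields $f=\boldsymbol{p}(\boldsymbol{q})^{\boldsymbol{\alpha}}\tilde f$ with $\tilde f$ irreducible in $R[\boldsymbol{q}^\pm]$.

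Two genuine defects remain, both at the edges. First, your disposal of the degenerate case does not prove the stated conclusion. If $\theta$ is associate to some $q_i$ in $R[\boldsymbol{p}^\pm]$ (this case is not vacuous: $f(\boldsymbol{q})=q_i$ satisfies all hypotheses), then what you exhibit is $f=\boldsymbol{p}(\boldsymbol{q})^{\boldsymbol{m}}\cdot(\text{unit of }R[\boldsymbol{q}^\pm])$, and a unit is not an irreducible $\tilde f$, which is what the lemma asserts. A real argument is needed here; one works as follows. From the hypotheses not every $p_j$ can be a unit of $R[\boldsymbol{q}^\pm]$ (if all $p_j$ were constants times $\boldsymbol{q}$-monomials, expanding $q_i\in R[\boldsymbol{p}^\pm]$ in $\boldsymbol{q}$-monomials would force each $q_i$ to be a constant times a $\boldsymbol{p}$-monomial, i.e.\ a unit of $R[\boldsymbol{p}^\pm]$, contradicting its irreducibility). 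Hence some prime $w$ of $R[\boldsymbol{q}^\pm]$ divides some $p_j$; since $S^{-1}R[\boldsymbol{p}^\pm]$ coincides with the localization of $R[\boldsymbol{q}^\pm]$ at the $p_j$, this $w$ is a unit of $S^{-1}R[\boldsymbol{p}^\pm]$, so $w=r\,\boldsymbol{p}^{\boldsymbol{m}'}\boldsymbol{q}^{\boldsymbol{s}'}$, i.e.\ the monomial $\boldsymbol{p}^{\boldsymbol{m}'}$ is an associate of the prime $w$ in $R[\boldsymbol{q}^\pm]$; writing $\theta=\boldsymbol{p}^{\boldsymbol{m}-\boldsymbol{m}'}\cdot\left(\theta\,\boldsymbol{p}^{\boldsymbol{m}'-\boldsymbol{m}}\right)$ then supplies the required irreducible cofactor. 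Second, your closing ``cosmetic point'' is a non sequitur: that the renaming $\boldsymbol{p}\mapsto\boldsymbol{q}$ carries the prime $\theta$ to a prime of $R[\boldsymbol{q}^\pm]$ does not make that renamed prime an associate of the surviving factor $g_{k_0}$ --- two primes of a UFD need not be associates, and the claim is in fact false: for $N=1$, $q=p+1$, $f(\boldsymbol{q})=q+1$, one has $\theta=p+2$, the surviving factor is $q+1$, while the renamed polynomial is $q+2$, not an associate. Fortunately nothing is lost: in the lemma $\tilde f(\boldsymbol{q})$ is only notation for \emph{some} irreducible element of $R[\boldsymbol{q}^\pm]$ (the literal-renaming reading would make the statement itself false, as the same example with $f(\boldsymbol{q})=q$ shows), and all coprimeness applications in the paper use only this existence form, so you should simply delete that last identification rather than try to repair it.
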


The following lemma is a generalization of Lemma~\ref{lemma-common}, which is a basic result in the elementary ring theory.
\begin{lemma}[\cite{NonlinToda}] \label{lemma-common-ufd}
Let $A$ be a UFD and let $S \subset A$ be multiplicative, i.e.,\ $1 \in S$, $0 \notin S$, $s t \in S$ for every $s, t \in S$.
Then, the localization $S^{-1} A := \{ \frac{a}{s} \mid a \in A, s \in S \}$ is a UFD and any irreducible element in $A$ is also irreducible in $S^{-1} A$.
\end{lemma}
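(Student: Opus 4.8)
The plan is to establish the two standard facts about localizations of a unique factorization domain by working directly from prime factorizations in $A$. First I would verify that $S^{-1}A$ is an integral domain: since $A$ is a domain and $0 \notin S$, the localization has no zero divisors, so the notion of irreducible/prime makes sense. I would then record the elementary but crucial observation that each $\frac{s}{1}$ with $s \in S$ is a unit in $S^{-1}A$, so that any $\frac{a}{s}$ is an associate of $\frac{a}{1}$. This reduces every factorization question in $S^{-1}A$ to elements of the form $\frac{a}{1}$ with $a \in A$.

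The heart of the argument is a dichotomy for the image of a prime $p \in A$: either $p$ divides some element of $S$, in which case $\frac{p}{1}$ becomes a unit, or it does not, in which case $\frac{p}{1}$ remains prime. For the second case I would suppose $\frac{p}{1} \mid \frac{a}{s}\cdot\frac{b}{t}$, write $\frac{ab}{st} = \frac{pc}{u}$ for suitable $c \in A$, $u \in S$, and clear denominators via the definition of equality in the localization: there is $v \in S$ with $v(abu - pcst) = 0$. Cancelling $v$, which is legitimate because $A$ is a domain, yields $p \mid abu$ in $A$. Since $p$ is prime in $A$ and $p \nmid u$ (as $p$ divides no element of $S$), primeness forces $p \mid a$ or $p \mid b$, i.e.\ $\frac{p}{1}$ divides $\frac{a}{s}$ or $\frac{b}{t}$. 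The same denominator-clearing shows $\frac{p}{1}$ is a nonunit here, since being a unit would force $p$ to divide an element of $S$.

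With this dichotomy, existence of prime factorizations in $S^{-1}A$ is immediate: given a nonzero nonunit $\frac{a}{s}$, replace it by its associate $\frac{a}{1}$, write $a = w\prod_i p_i^{e_i}$ as a unit times primes in $A$, and pass to $S^{-1}A$. Each factor $\frac{p_i}{1}$ is, by the dichotomy, either a unit (absorbed into the unit part) or a prime of $S^{-1}A$; because $\frac{a}{s}$ is a nonunit at least one genuine prime factor survives. Hence every nonzero nonunit of $S^{-1}A$ is a product of primes, which is exactly the criterion for $S^{-1}A$ to be a UFD. Finally, since irreducible and prime elements coincide in a UFD, a prime $p$ of $A$ that divides no element of $S$ stays prime, hence irreducible, in $S^{-1}A$, giving the asserted preservation.

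The point I would be most careful about, and the genuine obstacle, is precisely this last clause. As stated it must be read with the understanding that the irreducible element remains a nonunit in $S^{-1}A$: an irreducible of $A$ that happens to divide an element of $S$ becomes a unit, and units are not irreducible. This is exactly the situation relevant to Lemma~\ref{lemma-common}, where the Laurent ring $R[\boldsymbol{p}^\pm]$ is the localization $S^{-1}R[\boldsymbol{p}]$ with $S$ generated by the variables $p_1,\dots,p_N$; there the irreducibles that become units are the variables themselves, so a non-monomial irreducible, being coprime to every denominator, always survives as an irreducible. I would therefore phrase the application with this non-monomial (equivalently, nonunit) proviso made explicit.
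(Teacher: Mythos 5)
The paper never proves this lemma: it is quoted verbatim from \cite{NonlinToda} and described as a basic fact of elementary ring theory, so there is no in-paper argument to compare yours against. Judged on its own, your proof is the standard one and is sound: every $\frac{s}{1}$ with $s\in S$ is a unit, so each element is associate to some $\frac{a}{1}$; a prime $p$ of $A$ either divides an element of $S$ (and becomes a unit) or stays prime in $S^{-1}A$ (your denominator-clearing argument, which in a domain could be shortened since $\frac{a}{s}=\frac{b}{t}$ already means $at=bs$); pushing a prime factorization of $a$ through this dichotomy shows every nonzero nonunit of $S^{-1}A$ is a product of primes, which characterizes UFDs; and irreducible $=$ prime in a UFD then yields the preservation statement.

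Your closing caveat is the one substantive point, and you are right to insist on it: as literally stated the second clause is false, since an irreducible of $A$ that divides some element of $S$ becomes a unit in $S^{-1}A$ (e.g.\ $2$ in $\mathbb{Z}$ versus $\mathbb{Z}[\tfrac{1}{2}]$), so the assertion must be read with the proviso that the element remains a nonunit, i.e.\ divides no element of $S$. This proviso is automatically met in every place the paper invokes Lemma~\ref{lemma-common-ufd}: the localizations there invert initial variables (monomials), while the iterates $f_h$ whose irreducibility is being transported are non-monomial Laurent polynomials, coprime to the inverted variables by the induction hypothesis. So your proof, with the proviso made explicit, establishes exactly what the paper needs.
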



\section{The Laurent and the irreducibility properties of Equation \eqref{eq:laurent}}\label{app:2dim}

In this section, we review the Laurent and the irreducibility properties of Equation \eqref{eq:laurent} for a general good domain (Theorem~\ref{thm:laurent}),
which first appeared in  \cite{RIMSJPN}.
Note that in this section, the coefficient ring $R$ (resp.\ the parameters $a$, $b$) can be taken as any UFD (resp.\ any nonzero values in $R$),
while the discussion in \textsection \ref{sec:main} relies on the specific choices of $R$ and $a_i,b_i$: $R=\mathbb{Z}[a_i,b_i]_{1\le i\le d}$ and the parameters $a_i,b_i$ are indeterminate variables.
\begin{definition}(good domain \cite{rims,investigation})\label{def:gooddomain}
Let us denote by ``$\le$'' the product order on the lattice $\mathbb{Z}^2$, i.e.
\[
	h \le h'
	\quad
	\Leftrightarrow
	\quad
	t \le t' \text{ and } n \le n' 
	\qquad
\]
for $h = (t, n), h' = (t', n') \in \mathbb{Z}^2$.
A nonempty subset $H \subset \mathbb{Z}^2$ is a good domain (with respect to Equation \eqref{eq:laurent}) if it satisfies the following two conditions:
\begin{itemize}
\item
If $(t, n) \in H$, then $(t+1, n), (t, n+1) \in H$.
\item
For any $h \in H$, the set
\[
	\{ h' \in H \mid h' \le h \}
\]
is finite.
\end{itemize}
For a good domain $H \subset \mathbb{Z}^2$, we define
\[
	H_0 = \{ (t, n) \in H \mid (t-2, n-2) \notin H \},
\]
which we call the initial domain for $H$.
\end{definition}

\begin{remark}[\cite{HMW}]
The first condition on a good domain requires that the intersection between $H$ and the past light-cone
emanating from $h\in H$ is a finite set.
The second condition on a good domain requires that the future light-cone emanating from $h\in H$
does not intersect with $H_0$.
\end{remark}

\begin{example}\label{ex:ldomain}
The first quadrant
\[
	H = \{ (t, n) \in \mathbb{Z}^2 \mid t, n \ge 0 \}
\]
is a good domain.
The corresponding initial domain is given by
\[
	H_0 = \{ (t, n) \in H \mid t = 0, 1,\, n = 0, 1 \}.
\]
This domain plays an important role in the proof of Theorem~\ref{thm:laurent}.
\end{example}

\begin{definition}[$d_H(h)$]
For a good domain $H \subset \mathbb{Z}^2$, we define the function $d_H \colon H \to \mathbb{Z}_{> 0}$ by
\[
	d_H(h) = \# \{ h' \in H \mid h' \le h \},
\]
where ``$\#$'' denotes the cardinality of a set.
\end{definition}

\begin{lemma}
If $h_1, h_2 \in H$ satisfy $h_1 \le h_2$, then we have $d_H(h_1) \le d_H(h_2)$ and
the equality holds only if $h_1 = h_2$.
\end{lemma}
\begin{proof}
Let $h_1 \le h_2$.
Then we have
\[
	\{ h \in H \mid h \le h_1 \} \subset \{ h \in H \mid h \le h_2 \}.
\]
Taking the cardinality we obtain $d_H(h_1) \le d_H(h_2)$.
If $h_1 \ne h_2$, then $h_2$ belongs only to the right hand side and thus we have $d_H(h_1) \ne d_H(h_2)$.
\end{proof}

\begin{definition}
Let $f$ be a Laurent polynomial.
Then, $f$ can be uniquely factorized as
\[
	f = g h,
\]
where $g$ is a monic Laurent monomial and $h$ is a polynomial without any monomial factor.
We shall call $h$ the polynomial part of $f$.
\end{definition}

\begin{theorem}\label{thm:laurent}
Let $R$ be a UFD and let $a, b \in R$ be nonzero.
Then, Equation \eqref{eq:laurent} has the Laurent property on any good domain, i.e.\ every iterate is a Laurent polynomial of the initial variables.
Moreover, every iterate is irreducible as a Laurent polynomial.
\end{theorem}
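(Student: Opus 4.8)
The plan is to prove Theorem~\ref{thm:laurent} by a double induction using the partial order structure of the good domain $H$, exploiting the function $d_H$ as an induction measure. Concretely, I would induct on $d_H(h)$: the base cases are the points of the initial domain $H_0$, where each iterate is an initial variable and hence trivially an irreducible Laurent polynomial. For the inductive step at a point $h = (t,n)$ with $(t-2,n-2) \in H$, the recurrence \eqref{eq:laurent} expresses $f_{t,n}$ as a rational function of iterates at strictly smaller points (in the product order), all of which are already known to be irreducible Laurent polynomials by the induction hypothesis. The two tasks are then (i) the Laurent property, that the denominator $f^k_{t-2,n-1} f^k_{t-1,n-2}$ is cancelled by the numerator, and (ii) the irreducibility of the resulting Laurent polynomial.

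For the Laurent property (i), the key is to show that each of the denominator variables $f_{t-2,n-1}$ and $f_{t-1,n-2}$ divides the numerator to the required power $k$. The strategy I would use is the standard substitution trick: to show $f_{t-2,n-1}^k$ divides the numerator, I would evaluate the numerator modulo $f_{t-2,n-1}$, using the recurrence to re-express $f_{t,n-1}$ (or whichever neighbouring iterate involves $f_{t-2,n-1}$ in its own evolution) and checking that the numerator, together with its derivatives up to order $k-1$, vanishes on the locus $f_{t-2,n-1} = 0$. This is exactly the singularity-confinement cancellation reflected in the pattern \eqref{table:singpattern}, now carried out algebraically. The coefficient ring being an arbitrary UFD $R$ causes no difficulty here, since divisibility of Laurent polynomials over a UFD is well-behaved.

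For the irreducibility (ii), I would appeal to Lemma~\ref{lemma-common}. The point is that the change of variables \eqref{eq:xtof} sets up a bijection between the $f$-variables and the $x$-variables meeting the hypotheses of the lemma, so that irreducibility of $f_{t,n}$ as a Laurent polynomial in the $f$-variables can be transferred to (or deduced from) an irreducibility statement about the corresponding nonlinear iterate, or more directly by showing that the polynomial part of $f_{t,n}$ cannot factor. Here I would again use a specialization of the initial variables that forces a would-be factor to vanish while $f_{t,n}$ itself remains nonzero, ruling out any nontrivial monomial-free factorization; the appearance of at least one genuinely new initial variable in $f_{t,n}$ (absent from all its predecessors) is what drives the pairwise distinctness and hence irreducibility.

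The main obstacle I expect is the Laurent property (i) for a \emph{general} good domain $H$, rather than just the first quadrant of Example~\ref{ex:ldomain}. On the first quadrant the boundary configuration is rigid and the induction is clean, but for an arbitrary good domain one must verify that whenever a point $h$ is being computed, all the neighbours appearing in the cancellation argument genuinely lie in $H$ and have already been processed. This is precisely what the two defining conditions of a good domain guarantee — the first condition ensures the forward-cone closure needed so that the relevant neighbours of $h$ are present, and the second (finiteness of the down-set) legitimizes the induction on $d_H$ — so the careful part is matching each step of the algebraic cancellation to the combinatorial geometry of $H$ and confirming that no required iterate falls outside the domain. Once the cancellation is established domain-independently, the irreducibility argument via Lemma~\ref{lemma-common} goes through uniformly.
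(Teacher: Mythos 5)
Your plan for the Laurent property is broadly in line with the paper's: induct on $d_H(h)$, use coprimeness of $f_{t-2,n-1}$ and $f_{t-1,n-2}$ (from the induction hypothesis) to reduce to showing each divides the numerator to the $k$-th power, and use the recurrence at neighbouring points to exhibit the cancellation. One caveat: the paper does not argue via vanishing of derivatives up to order $k-1$ along $\{f_{t-2,n-1}=0\}$ --- that criterion is delicate over an arbitrary UFD (it fails in positive characteristic) and is not how the cancellation actually arises. Instead the paper works in the localized quotient ring $A[f^{-1}_{t-3,n-1},f^{-1}_{t-2,n-2},f^{-1}_{t-2,n-3}]/(f^k_{t-2,n-1})$, substitutes the expression for $f_{t-1,n}$ obtained from the recurrence one step earlier (the term carrying $f^k_{t-2,n-1}$ drops out), and gets an exact identity $F'=0$ modulo $f^k_{t-2,n-1}$ in one stroke; the divisibility by the full $k$-th power comes from this single algebraic identity, not from order-by-order vanishing. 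This part of your proposal is repairable.

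The genuine gap is in the irreducibility. Lemma~\ref{lemma-common} cannot produce it: that lemma presupposes an irreducibility statement on one side of a variable transformation and only controls how factorizations change under the transformation, and in this paper the transformation \eqref{eq:xtof} is used in the opposite direction --- irreducibility and coprimeness of the $f$-iterates are the input from which the coprimeness of the $x$-iterates is deduced (Theorem~\ref{thm:nonlinearcoprimeness}); there is no irreducibility statement about the nonlinear iterates available to transfer back. Likewise, your fallback of specializing initial values can only rule out \emph{specific, named} candidate divisors; it gives no handle on an arbitrary hypothetical factorization of $f_h$ into two unknown non-monomial factors, and ``$f_h$ contains a fresh initial variable'' yields pairwise coprimeness of distinct iterates, not irreducibility of a single one. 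The device you are missing, and on which the paper's proof hinges, is that the induction is carried out over \emph{all} good domains simultaneously: for a point $h$ one deletes a minimal element $h^{(i)}$ of the down-set $S=\{h'\in H\mid h'\le h\}$ to obtain a smaller good domain $H^{(i)}$ on which $f_h$ is already irreducible by induction, and then compares the two Laurent rings through localizations (using Lemma~\ref{lemma-common-ufd}) to force any factorization over $A$ into the constrained form $f_h=f^{\,r}_{t^{(i)}+2,\,n^{(i)}+2}\,F^{(i)}$ with $F^{(i)}$ irreducible. Only after the possible extra factors have been pinned down to these finitely many explicit iterates does the paper exclude them: by combinatorics of which initial variables can occur when $S$ has two or more minimal elements, and, when $S$ has a single minimal element (reducing to the first-quadrant domain), by explicit evaluations showing $f_{22}$ does not divide $f_{23},f_{32},f_{33},f_{24},f_{42},f_{34},f_{43},f_{44}$, after enlarging $R$ to an algebraically closed field, together with a further domain-modification argument for $t$ or $n$ greater than $4$. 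Without this reduction step your specialization strategy has no concrete factor to test, so the irreducibility claim remains unproved; note also that your stated induction fixes $H$, whereas the domain-modification argument requires the induction hypothesis to range over all good domains at once.
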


\begin{proof}
Let $H \subset \mathbb{Z}^2$ be an arbitrary good domain.
Let $A$ be the Laurent polynomial ring of the initial variables, i.e.
\[
	A = R \left[ f^{\pm}_{h_0} \mid h_0 \in H_0 \right],
\]
where $f_{h_0}$ ($h_0 \in H_0$) is a initial variable.
Let us show by induction on $d_H(h)$ that
\begin{itemize}
\item
$f_h \in A$,
\item
$f_h$ is irreducible as an element of $A$.
\end{itemize}
Note that we do not fix $H$ in the proof.
For example, if another good domain $H'$ and its element $h' \in H'$ satisfy $d_{H'}(h') < d_H(h)$, then we assume in the induction step that $f_{h'}$ belongs to $R [ f^{\pm}_{h'_0} \mid h'_0 \in H'_0 ]$ and is irreducible.

\setcounter{step}{0}
\begin{step}
If $h \in H_0$, then $f_h$ is an initial variable.
Therefore, $f_h \in A$ and $f_h$ is irreducible.
From here on, we only consider the case $h \in H \setminus H_0$.
\end{step}

\begin{step}
Let us show that $f_h \in A$.
Let
\[
	F = - f_{t-2,n-2} f^k_{t-1,n} f^k_{t,n-1}
		+ a f^{k^2-1}_{t-1,n-1} f^{k^2}_{t,n-2} f^k_{t-1,n} f^k_{t-2,n-1}
		+ b f^{k^2-1}_{t-1,n-1} f^{k^2}_{t-2,n} f^k_{t,n-1} f^k_{t-1,n-2}.
\]
Since $f_h = \dfrac{F}{f^k_{t-2,n-1}f^k_{t-1,n-2}}$, it is sufficient to show that $F$ is divisible by $f^k_{t-2,n-1} f^k_{t-1,n-2}$ in the ring $A$.
By the induction hypothesis, $f_{t-2,n-1}$ and $f_{t-1,n-2}$ are both irreducible.
Since there is at least one initial variable contained in only one of them, $f_{t-2,n-1}$ and $f_{t-1,n-2}$ are coprime with each other.
Therefore, it is sufficient to show that $F$ is divisible by both $f^k_{t-2,n-1}$ and $f^k_{t-1,n-2}$, respectively.
Because of the symmetry between $t$ and $n$, it is sufficient to show that $F$ can be divided by $f^k_{t-2,n-1}$.

If $(t-2, n-1) \in H_0$, then $f_{t-2,n-1}$ is an initial variable.
Thus, we only consider the case $(t-2, n-1) \notin H_0$.
Since $H$ is a good domain, neither $f_{t-1,n}$ nor $f_{t-1,n-1}$ is an initial variable.
Since
\[
	F \equiv f^k_{t,n-1} \left( - f_{t-2,n-2}f^k_{t-1,n} + b f^{k^2-1}_{t-1,n-1}f^{k^2}_{t-2,n}f^k_{t-1,n-2} \right) \mod f^k_{t-2,n-1},
\]
it is sufficient to show that
\[
	F' := - f_{t-2,n-2}f^k_{t-1,n} + b f^{k^2-1}_{t-1,n-1}f^{k^2}_{t-2,n}f^k_{t-1,n-2} \equiv 0 \mod f^k_{t-2,n-1}.
\]
By the induction hypothesis, $f_{t-2,n-1}$ is coprime with $f_{t-3,t-1}$, $f_{t-2,t-2}$ and $f_{t-2,t-3}$.
Therefore, it is sufficient to show that $F' = 0$ in the ring
\[
	A[f^{-1}_{t-3,n-1}, f^{-1}_{t-2,n-2}, f^{-1}_{t-2,n-3}]/ \left( f^k_{t-2,n-1} \right).
\]
From here on, all calculations in this step will be done in this ring.

Since $f^k_{t-2,n-1} = 0$, we have
\[
	f_{t-1,n} = - \frac{f_{t-3,n-2}f^k_{t-2,n}f^k_{t-1,n-1}}{f^k_{t-3,n-1}f^k_{t-2,n-2}}
\]
and
\begin{align*}
	F' &=
	-\frac{f^k_{t-3,n-2} f^{k^2}_{t-2,n} f^{k^2}_{t-1,n-1}}{f^{k^2}_{t-3,n-1} f^{k^2-1}_{t-2,n-2}}
	+ b f^{k^2-1}_{t-1,n-1} f^{k^2}_{t-2,n} f^k_{t-1,n-2} \\
	&= \frac{f^{k^2}_{t-2,n} f^{k^2-1}_{t-1,n-1}}{f^{k^2}_{t-3,n-1} f^{k^2-1}_{t-2,n-2}} \left( - f^k_{t-3,n-2} f_{t-1,n-1} + b f^k_{t-1,n-2} f^{k^2}_{t-3,n-1} f^{k^2-1}_{t-2,n-2} \right) \\
	&= \frac{f^{k^2}_{t-2,n} f^{k^2-1}_{t-1,n-1}}{f^{k^2}_{t-3,n-1} f^{k^2-1}_{t-2,n-2} f^k_{t-2,n-3}} \left( - f^k_{t-3,n-2} f_{t-1,n-1} f^k_{t-2,n-3} + b f^k_{t-1,n-2} f^{k^2}_{t-3,n-1} f^{k^2-1}_{t-2,n-2} f^k_{t-2,n-3} \right).
\end{align*}
It follows from Equation (\ref{eq:laurent}) that
\begin{align*}
	& - f^k_{t-3,n-2}f_{t-1,n-1}f^k_{t-2,n-3} + bf^k_{t-1,n-2}f^{k^2}_{t-3,n-1}f^{k^2-1}_{t-2,n-2}f^k_{t-2,n-3} \\
	&=
	- f_{t-3,n-3} f^k_{t-2,n-1} f^k_{t-1,n-2} + a f^{k^2-1}_{t-2,n-2} f^{k^2}_{t-1,n-3} f^k_{t-2,n-1} f^k_{t-3,n-2} \\
	&= 0
\end{align*}
and thus we have $F' = 0$.

We have proved the Laurent property of $f_h$.
Let us show the irreducibility of $f_h$ in the steps below.
\end{step}

\begin{step}\label{step:fhft-1n}
We show that $f_h$ cannot be divided by $f_{t-1,n}$, $f_{t,n-1}$ or $f_{t-1,n-1}$ if none of these is an initial variable.

Using Equation (\ref{eq:laurent}) we have
\[
	f_{t,n}f^k_{t-2,n-1}f^k_{t-1,n-2} =
	- f_{t-2,n-2} f^k_{t-1,n} f^k_{t,n-1}
	+ a f^{k^2-1}_{t-1,n-1} f^{k^2}_{t,n-2} f^k_{t-1,n} f^k_{t-2,n-1}
	+ b f^{k^2-1}_{t-1,n-1} f^{k^2}_{t-2,n} f^k_{t,n-1} f^k_{t-1,n-2}.
\]
Suppose that $f_{t-1,n}$ divides $f_h$.
Then, considering the both sides modulo $f_{t-1,n}$ we have
\[
	b f^{k^2-1}_{t-1,n-1} f^{k^2}_{t-2,n} f^k_{t,n-1} f^k_{t-1,n-2} \equiv 0 \mod f_{t-1,n},
\]
which leads to a contradiction since $f_{t-1,n-1}, f_{t-2,n}, f_{t,n-1}, f_{t-1,n-2}$ are all coprime with $f_{t-1,n}$.
We can show that $f_h$ is not divisible by $f_{t,n-1}$ or $f_{t-1,n-1}$ in the same way.
\end{step}

\begin{step}
Let us define the set $S$ by
\[
	S = \{ h' \in H \mid h' \le h \}.
\]
In this step we show that if $S$ has at least two minimal elements (with respect to the order $\le$), then $f_h$ is irreducible in $A$.

Let $\left\{ h^{(i)} = (t^{(i)}, n^{(i)}) \right\}_{i=1, \ldots, N} \subset S$ be the set of minimal elements of $S$ ($N \ge 2$).
We define the domains $H^{(i)} \subset \mathbb{Z}^2$ by
\[
	H^{(i)} = H \setminus \{ h^{(i)} \}
\]
for $i = 1, \ldots, N$.
Since $h^{(i)}$ is a minimal element of $H$, $H^{(i)}$ is a good domain.
Since
\[
	d_{H^{(i)}}(h) = d_{H}(h) - 1,
\]
it follows from the induction hypothesis that
\[
	f_h \in A^{(i)} := R \left[ f^{\pm}_{h_0} \mid h_0 \in H^{(i)}_0 \right]
\]
and $f_h$ is irreducible in $A^{(i)}$, where $H^{(i)}_0$ is the initial domain for $H^{(i)}$ and is written as
\[
	H^{(i)}_0 = \{ (t^{(i)} + 2, n^{(i)} + 2) \} \cup H_0 \setminus \{ h^{(i)} \}.
\]

Let us consider the following relations on the localized rings:
\[
	A \subset
	A \left[ f^{-1}_{t^{(i)}+2,n^{(i)}+2} \right] =
	A^{(i)} \left[ f^{-1}_{h^{(i)}} \right] \supset
	A^{(i)}.
\]
Since $f_h$ is irreducible in $A^{(i)}$ and the localization always preserves the irreducibility by Lemma \ref{lemma-common-ufd}, $f_h$ is also irreducible in $A \left[ f^{-1}_{t^{(i)}+2,n^{(i)}+2} \right]$.
Therefore, we can express $f_h$ in $N$-ways as
\begin{equation}\label{eq:irredrep}
	f_h = f^{r_i}_{t^{(i)}+2,n^{(i)}+2} F^{(i)},
\end{equation}
where $F^{(i)} \in A$ is irreducible and $r_i$ is a nonnegative integer.
Since $N \ge 2$, $f_h$ becomes reducible only in the case where $N = 2$ and
\begin{equation}\label{eq:exc}
	f_h = u f_{t^{(1)}+2,n^{(1)}+2} f_{t^{(2)}+2,n^{(2)}+2}
\end{equation}
for a unit $u \in A$.
Let us exclude this case.

If $t^{(i)} + 2 > t$ or $n^{(i)} + 2 > n$, then $f_h$ does not contain the initial variable $f_{h^{(i)}}$.
For example, if $t^{(i)} + 2 > t$, it follows from the minimality of $h_i \in S$ that the four elements
\[
	(t^{(i)}, n^{(i)} + 1), \quad
	(t^{(i)}, n^{(i)} + 2), \quad
	(t^{(i)} + 1, n^{(i)} + 1), \quad
	(t^{(i)} + 1, n^{(i)} + 2)
\]
all belong to $H_0$.
Therefore, if $f_{t',n'}$ contains the initial variable $f_{h^{(i)}}$ for $t' \le t$, then $(t', n')$ must be $h^{(i)}$.
Since $f_{h^{(i)}}$ is clearly contained in the polynomial part of $f_{t^{(i)}+2,n^{(i)}+2}$, if $f_h$ does not contain the initial variable $f_{h^{(i)}}$, then $r_i$ in the expression \eqref{eq:irredrep} must be $0$.
Hence, it is sufficient to exclude the case \eqref{eq:exc} only when $(t^{(i)} + 2, n^{(i)} + 2) \le h$ for $i = 1, 2$.

Let $m$ be the least integer that satisfies $(t - m, n) \in H_0$.
Since the initial variable $f_{t-m,n}$ is contained in the polynomial part of $f_h$, at least one of $f_{t^{(1)}+2,n^{(1)}+2}, f_{t^{(2)}+2,n^{(2)}+2}$ must contain $f_{t-m,n}$ in its polynomial part.
We may assume that $f_{t^{(1)}+2,n^{(1)}+2}$ contains $f_{t-m,n}$.
In this case, we have $(t-m, n) \le (t^{(1)}+2, n^{(1)}+2)$.
On the other hand, we have already assumed that $n^{(1)} + 2 \le n$.
Thus, we have $n = n^{(1)} + 2$.

Taking the least $\ell$ that satisfies $(t, n - \ell) \in H_0$ and discussing in the same way as above, we conclude that there exists $i$ such that $t = t^{(i)} + 2$.
If $i = 1$, then we have $h = (t^{(1)}+2, n^{(1)}+2)$.
In this case, $f_{t-2,n-2}, f_{t-2,n-1}, f_{t-2,n}, f_{t-1,n-2}, f_{t-1,n-1}, f_{t-1,n}, f_{t,n-2}, f_{t,n-1}$ are all initial variables.
Thus, $f_h$ itself is irreducible and the decomposition as in \eqref{eq:exc} is impossible.
Hence, we have
\[
	n = n^{(1)} + 2, \quad
	t = t^{(2)} + 2.
\]

If $(t-3, n-3) \in H$, then a minimal element of the set
\[
	\{ h' \in H \mid h' \le (t-3, n-3) \}
\]
is also minimal in $S$, which leads to contradiction since $S$ has only two minimal elements.
Therefore, we have $(t-3, n-3) \notin H$ and $(t-1, n-1) \in H_0$.

It follows from Equation \eqref{eq:laurent} that $f_h$ contains the initial variable $f_{t-1,n-1}$ in its polynomial part.
Because of the decomposition \eqref{eq:exc}, at least one of $f_{t^{(1)}+2,n^{(1)}+2}$ and $f_{t^{(2)}+2,n^{(2)}+2}$ must contain the initial variable $f_{t-1,n-1}$ in its polynomial part.
We may assume that $f_{t^{(1)}+2,n^{(1)}+2}$ contains $f_{t-1,n-1}$.
Since
\begin{align*}
	(t-1, n-1) &\le (t^{(1)}+2,n^{(1)}+2) \le (t, n), \\
	n &= n^{(1)} + 2, \\
	(t^{(1)}+2,n^{(1)}+2) &\ne (t, n),
\end{align*}
the only possible case is
\[
	t^{(1)} = t - 3, \quad
	n^{(1)} = n - 2.
\]
In this case, the decomposition \eqref{eq:exc} is
\[
	f_h = u f_{t-1,n} f_{t,n^{(2)}+2},
\]
which contradicts to Step~\ref{step:fhft-1n}.

In the steps below, we will show that if $S$ has only one minimal element, then $f_h$ is irreducible.
\end{step}

\begin{step}
As the lattice $\mathbb{Z}^2$ can be translated at will, we may assume that the minimal element of $S$ is the origin $(0, 0)$.
In this case, the set $S$ can be expressed as
\[
	S = \{ (t', n') \in \mathbb{Z}^2 \mid 0 \le t' \le t, 0 \le n' \le n \}.
\]
That is, $S$ coincides with the lower left part of the first quadrant (Example~\ref{ex:ldomain}).
Since $f_h$ is determined by $S$, it is sufficient to show the irreducibility of $f_h$ when $H$ is the first quadrant.
\end{step}

\begin{step}
Let
\[
	H' = H \setminus \{ (0, 0) \}.
\]
It follows from the induction hypothesis that $f_h$ is irreducible as an element of the ring $A' := R \left[ f^{\pm}_{h_0} \mid h_0 \in H' \right]$.
Using the relations on the localized rings
\[
	A \subset A \left[ f^{-1}_{22} \right] = A' \left[ f^{-1}_{00} \right] \supset A',
\]
we can express $f_h$ as
\begin{equation}\label{eq:f22}
	f_h = f^{r}_{22} F
\end{equation}
where $F \in A$ is irreducible and $r \ge 0$.
Therefore, it is sufficient to show that $f_{22}$ does not divide $f_h$ in the ring $A$.
We already showed in Step~\ref{step:fhft-1n} that $f_{22}$ does not divide $f_{23}, f_{32}, f_{33}$.
\end{step}

\begin{step}
Whether $f_{22}$ divides $f_h$ or not is invariant under extension of the coefficient ring.
Therefore, replacing $R$ with the algebraic closure of its field of fractions if necessary, we may assume that $R$ is an algebraically closed field.
\end{step}

\begin{step}\label{step:f22f24}
Let us show that $f_{22}$ does not divide $f_{24}$.
We will substitute appropriate nonzero values (elements of $R$) for some initial variables to check that $f_{22}$ does not divide $f_{24}$.

Let us take the following initial values (see Table~\ref{table:value8}):
\begin{align*}
	f_{01} &= f_{02} = f_{10} = f_{11} = f_{12} = f_{13} = f_{14} = f_{21} = 1, \\
	f_{03} &= b^{-1/k^2}, \quad
	f_{04} = \gamma, \quad
	f_{20} = \delta,
\end{align*}
where $\gamma, \delta \in R$ satisfy
\[
	\gamma^{k^2} \ne \frac{1}{b}, \quad
	\delta^{k^2} \ne -\frac{b}{a}.
\]
Note that we can always take such $\gamma, \delta$ since $R$ is an algebraically closed field.
\begin{table}
\[
\begin{array}{|c|c||c|}\hline
\gamma & 1 & \\ \hline
b^{-1/k^2} & 1 & \\ \hline
1 & 1 & \\ \hline \hline
1 & 1 & 1 \\ \hline
f_{00} & 1 & \delta \\ \hline
\end{array}
\]
\caption{Table showing the initial values in Step~\ref{step:f22f24}.}
\label{table:value8}
\end{table}
Under these initial values, we have
\[
	f_{22} = - f_{00} + a \delta^{k^2} + b.
\]
Since $a \delta^{k^2} + b \ne 0$, $f_{22}$ is of degree one with respect to $f_{00}$ and thus is not a unit in $A$.
Therefore, it is sufficient to show that $f_{24}$ is not divisible by $\epsilon := f_{22}$.

By construction we have $f_{23} = a$.
A direct calculation shows that
\[
	f_{24} = \frac{1}{b^{-\frac{1}{k}}} \left(
	- a^k
	+ a \epsilon^{k^2} b^{-\frac{1}{k}}
	+ b \gamma^{k^2} a^k
	\right)
	= a^k b^{\frac{1}{k}} \left( b \gamma^{k^2} - 1 \right) + a \epsilon^{k^2}
\]
and thus $f_{22}$ does not divide $f_{24}$.

We can prove in the same way that $f_{22}$ does not divide $f_{42}$.
\end{step}

\begin{step}\label{step:f22f34}
We show that $f_{22}$ does not divide $f_{34}$.

Let us take the following initial values (see Table~\ref{table:value9}):
\begin{align*}
	f_{01} &= f_{02} = f_{10} = f_{11} = f_{12} = f_{13} = f_{14} = f_{21} = f_{31} = 1, \\
	f_{03} &= f_{04} = b^{-1/k^2}, \quad
	f_{20} = \delta, \quad
	f_{30} = a^{-1/k^2},
\end{align*}
where $\delta \in R$ satisfies
\[
	\delta^{k^2} \ne -\frac{b}{a}.
\]
\begin{table}
\[
\begin{array}{|c|c||c|c|c|}\hline
b^{-\frac{1}{k^2}} & 1 & \mathcal{O}(\epsilon^{k^2}) & a^{2k^2-1}b^{k^2+1} + \mathcal{O}(\epsilon^{k^2-1}) & \\ \hline
b^{-\frac{1}{k^2}} & 1 & a & - a^k b^k + \mathcal{O}(\epsilon^{k^2-1}) & f_{43} \\ \hline
1 & 1 & \epsilon & b & f_{42} \\ \hline \hline
1 & 1 & 1 & 1 & 1 \\ \hline
f_{00} & 1 & \delta & a^{-\frac{1}{k^2}} & \gamma \\ \hline
\end{array}
\]
\caption{Table showing the calculations in Steps~\ref{step:f22f34} and \ref{step:f22f44}.}
\label{table:value9}
\end{table}
As in Step~\ref{step:f22f24}, $f_{22}$ is of degree one with respect to $f_{00}$ and thus is not a unit in $A$.
Therefore, it is sufficient to show that $f_{34}$ is not divisible by $\epsilon := f_{22}$.

Using $f_{23} = a$, we have
\[
	f_{24} = a \epsilon^{k^2} = \mathcal{O}(\epsilon^{k^2}).
\]
A straightforward calculation shows that
\begin{align*}
	f_{33} &= - a^k b^k + a \epsilon^{k^2-1} a^k + b \epsilon^{k^2-1} b^k \\
	&= - a^k b^k + \mathcal{O}(\epsilon^{k^2-1}), \\
	f_{34} 
	&= \frac{1}{\epsilon^k} \Bigg(
	- \left( \mathcal{O}(\epsilon^{k^2}) \right)^k \left( - a^k b^k + \mathcal{O}(\epsilon^{k^2-1}) \right)^k
	+ a a^{k^2-1} b^{k^2} \left( \mathcal{O}(\epsilon^{k^2}) \right)^k \\
	& \quad\quad+ b a^{k^2-1} \left( - a^k b^k + \mathcal{O}(\epsilon^{k^2-1}) \right)^k \epsilon^k
	\Bigg) \\
	&= a^{2k^2-1}b^{k^2+1} + \mathcal{O}(\epsilon^{k^2-1}).
\end{align*}
Hence, $f_{34}$ is not divisible by $f_{22}$.

We can prove in the same way that $f_{22}$ does not divide $f_{43}$.
\end{step}

\begin{step}\label{step:f22f44}
We show that $f_{22}$ does not divide $f_{44}$.

First we take the same initial values as in Step~\ref{step:f22f34}.
Then, we take the initial values $f_{40}, f_{41}$ as
\[
	f_{40} = \gamma, \quad
	f_{41} = 1,
\]
where $\gamma \in R$ satisfies
\[
	a \gamma^{k^2} \ne \delta.
\]
By construction, we have
\[
	f_{42} = \frac{1}{a^{-\frac{1}{k}}}
	\left(
	- \delta b^k
	+ a \gamma^{k^2} b^k
	+ b \epsilon^{k^2} a^{-\frac{1}{k}}
	\right),
\]
which is not divisible by $\epsilon := f_{22}$.
Using the calculations in Step~\ref{step:f22f34} (see Table~\ref{table:value9}), we have
\begin{align*}
	f_{44} &=
	\frac{1}{a^k b^k}\Bigg(
	- \epsilon \left( a^{2k^2-1}b^{k^2+1} + \mathcal{O}(\epsilon^{k^2-1}) \right)^k f_{43}^k & \\
	& \quad\quad + a \left( - a^k b^k + \mathcal{O}(\epsilon^{k^2-1}) \right)^{k^2-1} f^{k^2}_{42} \left( a^{2k^2-1}b^{k^2+1} + \mathcal{O}(\epsilon^{k^2-1}) \right)^k a^k & \\
	& \quad\quad + b \left( - a^k b^k + \mathcal{O}(\epsilon^{k^2-1}) \right)^{k^2-1} \left( \mathcal{O}(\epsilon^{k^2}) \right)^{k^2} f^k_{43} b^k
	\Bigg) & \\
	&= - a^{3k^3-2k+1} b^{2k^3-k} f^{k^2}_{42} + \mathcal{O}(\epsilon).
\end{align*}
Therefore, $\epsilon$ does not divide $f_{44}$ since $f_{42}$ is not divisible by $\epsilon$.
\end{step}

\begin{step}
Let us show the irreducibility for the remaining cases, i.e.\ the case where $t$ or $n$ is greater than $4$.
In this case, $(t - i, n - j) \ne (2, 2)$ for $i, j = 0, 1, 2$.

Let
\[
	H' = H \setminus \{ (i, j) \mid i, j = 0, 1, 2 \}.
\]
Then, $H'$ is clearly a good domain.
We define two sets $E, E'$ by
\begin{align*}
	E &= \{ (0, 0), (0, 1), (0, 2), (1, 0), (1, 1), (1, 2), (2, 0), (2, 1) \}, \\
	E' &= \{ (2, 3), (2, 4), (3, 2), (3, 3), (3, 4), (4, 2), (4, 3), (4, 4) \}.
\end{align*}
Then we have
\[
	H_0 \setminus H'_0 = E, \quad
	H'_0 \setminus H_0 = E'.
\]
It is important to note that $(2, 2) \notin E, E'$ and $E \cap E' = \emptyset$.

Since $d_{H'}(h) < d_H(h)$, it follows from the induction hypothesis that $f_h$ is irreducible as an element of the ring
\[
	A' := R \left[ f^{\pm}_{h_0} \mid h_0 \in H' \right].
\]
Using the relation on the localized rings
\[
	A \subset A \left[ f^{-1}_{h_0} \mid h_0 \in E' \right] = A' \left[ f^{-1}_{h_0} \mid h_0 \in E \right] \supset A',
\]
we can express $f_h$ as
\[
	f_h = F' \prod_{(i, j) \in E'} f^{r'_{ij}}_{ij},
\]
where $F' \in A$ is irreducible and $r'_{ij} \ge 0$ ($(i,j) \in E'$).
Since $(2, 2) \notin E'$, it follows from the decomposition \eqref{eq:f22} that, if $f_h$ is not irreducible, then it must be decomposed as
\begin{equation}\label{eq:imp}
	f_h = u f_{22} f_{ij},
\end{equation}
where $u \in A$ is a unit and $(i, j) \in E'$.

In this step we have assumed that at least one of $t, n$ is greater than $4$.
Assume $t \ge 5$ and take the least integer $m$ that satisfies $(t, n - m) \in H_0$.
Then, the initial variable $f_{t,n-m}$ is contained in the polynomial part of $f_h$.
However, neither $f_{22}$ nor $f_{ij}$ contains $f_{t,n-m}$ since $i \le 4$.
Therefore, the decomposition \eqref{eq:imp} is impossible.
\end{step}
\end{proof}

\begin{corollary}
Equation \eqref{eq:laurent} has the coprimeness property on any good domain, i.e.\ every pair of the iterates is coprime as Laurent polynomials of the initial variables.
\end{corollary}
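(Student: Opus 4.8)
The plan is to deduce coprimeness directly from the irreducibility established in Theorem~\ref{thm:laurent}, with essentially no new calculation. I would fix a good domain $H$, write $A = R[f^{\pm}_{h_0} \mid h_0 \in H_0]$ as in the proof of the theorem, and recall that $A$ is a UFD, being a localization of a polynomial ring over the UFD $R$ (cf.\ Lemma~\ref{lemma-common-ufd}). For distinct $h = (t,n)$ and $h' = (t',n')$ in $H$, Theorem~\ref{thm:laurent} gives that $f_h$ and $f_{h'}$ are both irreducible in $A$, and in a UFD two irreducible elements are either coprime or associates. Hence it suffices to rule out that $f_h$ and $f_{h'}$ are associates, i.e.\ that $f_h = u\, f_{h'}$ for a unit $u \in A$ (a monic Laurent monomial times a unit of $R$).

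The key observation will be that an associate relation $f_h = u f_{h'}$ forces the polynomial parts of $f_h$ and $f_{h'}$ to coincide up to a unit of $R$, since the polynomial part of the Laurent monomial $u$ is trivial. It is therefore enough to exhibit an initial variable that occurs in the polynomial part of one iterate but does not occur at all in the other. For this I would combine the support constraint with the ``corner'' initial variables already isolated in the proof of Theorem~\ref{thm:laurent}: by the evolution equation \eqref{eq:laurent} and induction on $d_H$, the iterate $f_h$ is a Laurent polynomial only in the initial variables $\{f_{h_0} \mid h_0 \in H_0,\ h_0 \le h\}$, and (as used repeatedly in that proof) its polynomial part contains the distinguished initial variables $f_{t,n-\ell}$ and $f_{t-m,n}$, where $\ell, m \ge 0$ are minimal with $(t, n-\ell), (t-m, n) \in H_0$. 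Note that $f_{t,n-\ell}$ has first coordinate $t$ and $f_{t-m,n}$ has second coordinate $n$.

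Finally I would compare $h$ and $h'$ coordinatewise. If $t \ne t'$, say $t > t'$, then the variable $f_{t,n-\ell}$, which lies in the polynomial part of $f_h$, cannot appear in $f_{h'}$ at all, since every initial variable occurring in $f_{h'}$ is indexed by some $h_0 \le h' = (t',n')$ and hence has first coordinate at most $t' < t$; thus the polynomial parts of $f_h$ and $f_{h'}$ differ and the two iterates cannot be associates. If $t = t'$, then $n \ne n'$, and the symmetric argument using $f_{t-m,n}$ (second coordinate $n$) against the bound $n'$ gives the same conclusion. In either case $f_h$ and $f_{h'}$ are irreducible and non-associate, hence coprime, which is the assertion. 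The only genuinely nontrivial input is the membership of the corner variables $f_{t,n-\ell}, f_{t-m,n}$ in the polynomial part of $f_h$; this is not a routine support computation but is precisely the fact already verified in proving irreducibility in Theorem~\ref{thm:laurent}, so the main obstacle has effectively been discharged there and what remains is only the short associate/support book-keeping sketched above.
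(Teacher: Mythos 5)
Your argument is correct and follows essentially the paper's intended route: the corollary is stated without a separate proof precisely because, as you do, one combines the irreducibility from Theorem~\ref{thm:laurent} with the observation that two distinct iterates cannot be associates in the UFD $A$, since each contains in its polynomial part a corner initial variable (sharing its $t$- or $n$-coordinate with the iterate) which the other cannot involve at all by the support constraint $h_0 \le h$. The ingredients you invoke (the UFD structure of $A$ via Lemma~\ref{lemma-common-ufd}, and the corner variables $f_{t,n-\ell}$, $f_{t-m,n}$ lying in the polynomial part) are exactly the facts already used inside the proof of Theorem~\ref{thm:laurent}, so your bookkeeping introduces no new gap.
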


%
%
%
%

\section{Coprimeness property of Equation \eqref{eq:nonlinear}}\label{sec:xcoprime}

In this section, we prove the coprimeness property of Equation \eqref{eq:nonlinear}.

\begin{definition}\label{def:nonlineargooddomain}
A nonempty subset $G \subset \mathbb{Z}^2$ is a good domain (with respect to Equation \eqref{eq:nonlinear}) if it satisfies the following two conditions:
\begin{itemize}
\item
If $(t, n) \in G$, then $(t+1, n), (t, n+1) \in G$.
\item
For any $h \in G$, the set
\[
	\{ h' \in G \mid h' \le h \}
\]
is finite.
\end{itemize}
For a good domain $G \subset \mathbb{Z}^2$, we define
\[
	G_0 = \{ (t, n) \in G \mid (t-1, n-1) \notin G \},
\]
which we call the initial domain for $G$.
\end{definition}

Note that the only difference between Definitions~\ref{def:gooddomain} and \ref{def:nonlineargooddomain} is the definition of the initial domain.

\begin{theorem}\label{thm:nonlinearcoprimeness}
Let $G \subset \mathbb{Z}^2$ be a good domain with respect to Equation \eqref{eq:nonlinear} and we consider each iterate $x_h$ as a rational function of the initial variables $x_{h_0}$ ($h_0 \in G_0$).
Then, there exists a family of irreducible Laurent polynomials $f'_{h}$ ($h \in H$) of the initial variables such that each $x_h$ is decomposed as
\[
	x_{t,n} = \frac{f'_{t,n}f'_{t-1,n-1}}{f'^k_{t-1,n}f'^k_{t,n-1}}.
\]
Moreover, $f'_h$ and $f'_{h'}$ are coprime as Laurent polynomials unless $h = h'$.
In particular, we have
\[
	|t - t'| > 1 \text{ or } |n - n'| > 1 \quad
	\Rightarrow
	\quad
	x_{t,n} \text{ and } x_{t',n'} \text{ are coprime as rational functions},
\]
i.e.\ Equation \eqref{eq:nonlinear} satisfies the coprimeness property on any good domain.
\end{theorem}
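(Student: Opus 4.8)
The plan is to transport the Laurent, irreducibility, and coprimeness properties of the tau-function form \eqref{eq:laurent}, established in Theorem~\ref{thm:laurent} and the corollary following it, back to the nonlinear variables through the substitution \eqref{eq:xtof}. This follows the philosophy of the proof of Theorem~\ref{ex-dKdV_coprime}: I would introduce auxiliary variables that absorb the gauge freedom of \eqref{eq:xtof}, apply the tau-function results in the resulting enlarged Laurent ring, and then descend to the ring generated by the $x$-initial variables alone.

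First I would note that Definitions~\ref{def:gooddomain} and \ref{def:nonlineargooddomain} impose literally the same two conditions on a subset of $\mathbb{Z}^2$ and differ only in their initial domains, so a good domain $G$ for \eqref{eq:nonlinear} is also a good domain for \eqref{eq:laurent}. Writing $H:=G$, its \eqref{eq:laurent}-initial domain $H_0$ contains $G_0$, the surplus points forming a second diagonal layer. On this outer layer I would place fresh auxiliary variables, and on the inner layer I would fix the remaining $f$-initials so that \eqref{eq:xtof} reproduces the prescribed $x$-initials on $G_0$; each inner $f$-initial is then a given $x$-initial multiplied by a Laurent monomial in the auxiliary variables. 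Since this is a Laurent-monomial change of coordinates on the initial data, it induces an isomorphism of Laurent rings, and Lemma~\ref{lemma-common-ufd} ensures that irreducibility is preserved. Theorem~\ref{thm:laurent} and its corollary then show that every iterate $f_{t,n}$ is an irreducible Laurent polynomial in the auxiliary variables and the $x$-initials, and that distinct iterates are pairwise coprime.

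The core of the argument is to separate the gauge variables. The substitution \eqref{eq:xtof} is invariant under any rescaling $f_h\mapsto\lambda_h f_h$ by a compatible family of monomials, and the recursion \eqref{eq:laurent}---being derived from \eqref{eq:nonlinear} through \eqref{eq:xtof}---is covariant under the same family; I would choose the auxiliary variables precisely so that they generate this gauge family. It then follows by induction along \eqref{eq:laurent} that each iterate factors as $f_{t,n}=\mu_{t,n}\,f'_{t,n}$, where $\mu_{t,n}$ is a unit monomial in the auxiliary variables and $f'_{t,n}$ is a Laurent polynomial in the $x$-initials alone. Because a Laurent monomial is a unit, $f'_{t,n}$ inherits both irreducibility and pairwise coprimeness from $f_{t,n}$; and substituting $f_{t,n}=\mu_{t,n}f'_{t,n}$ into \eqref{eq:xtof}, the monomial factors cancel, so that \eqref{eq:xtof} holds verbatim with every $f$ replaced by the corresponding $f'$ and $x_{t,n}$ is exhibited as free of the auxiliary variables. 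This produces exactly the family $\{f'_h\}_{h\in G}$ demanded by the statement.

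The ``in particular'' assertion then follows from the coprimeness of the $f'_h$. The four-point stencil $\{(t,n),(t-1,n-1),(t-1,n),(t,n-1)\}$ entering $x_{t,n}$ uses only the $t$-coordinates $\{t-1,t\}$ and the $n$-coordinates $\{n-1,n\}$, so if $|t-t'|>1$ or $|n-n'|>1$ the stencils of $x_{t,n}$ and $x_{t',n'}$ are disjoint; their numerators and denominators are then products over disjoint sets of pairwise-coprime irreducibles, whence $x_{t,n}$ and $x_{t',n'}$ are coprime as rational functions of the $x$-initials. I expect the main obstacle to lie precisely in this separation of the gauge monomial together with the bookkeeping of the initial-data correspondence: the mismatch between the single-layer domain $G_0$ and the double-layer domain $H_0$, compounded by the index shift built into \eqref{eq:xtof}, makes the explicit choice of auxiliary variables delicate, and verifying that these variables enter every $f_{t,n}$ through one overall monomial---rather than in a genuinely mixed fashion---is the step on which the whole descent to the $x$-initial ring rests.
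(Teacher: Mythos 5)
Your overall strategy---enlarging the initial data with auxiliary variables that absorb the gauge freedom of \eqref{eq:xtof}, invoking Theorem~\ref{thm:laurent} in the enlarged Laurent ring, and descending to the $x$-initials as in Theorem~\ref{ex-dKdV_coprime}---is the paper's strategy, but your setup of the domains contains a genuine flaw. You take $H:=G$, so the $f$-lattice has no sites strictly below the layer $G_0$; with this choice the surplus layer $H_0\setminus G_0$ lies \emph{above} $G_0$, inside $G$. However, for $(t_0,n_0)\in G_0$ the relation \eqref{eq:xtof} involves $f_{t_0-1,n_0-1}$ (and in general also $f_{t_0-1,n_0}$, $f_{t_0,n_0-1}$), and $(t_0-1,n_0-1)\notin G$ by the very definition of $G_0$ in Definition~\ref{def:nonlineargooddomain}. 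Hence with $H=G$ there are not enough $f$-sites to ``fix the remaining $f$-initials so that \eqref{eq:xtof} reproduces the prescribed $x$-initials on $G_0$'', and the decomposition $x_{t,n}=f'_{t,n}f'_{t-1,n-1}/(f'^k_{t-1,n}f'^k_{t,n-1})$ cannot even be written down for $h\in G_0$. The paper instead takes the translated domain $H=G-(1,1)$, which is a good domain for \eqref{eq:laurent} with $H_0=G_0\cup(G_0-(1,1))$; the auxiliary variables $Y_{h_0}$ are placed on the new layer $G_0-(1,1)$ \emph{outside} $G$, below $G_0$, and then the monomial system \eqref{eq:ftox} is solvable recursively for the $f'$-initials on $G_0$. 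This diagonal shift is not mere bookkeeping: it is what makes the matching step possible at all.

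The second gap is the descent itself. You rest it on the claim that every iterate factors as $f_{t,n}=\mu_{t,n}f'_{t,n}$ with $\mu_{t,n}$ a monomial in the auxiliaries and $f'_{t,n}$ a Laurent polynomial in the $x$-initials alone, and you yourself flag the verification of this one-overall-monomial structure as the main obstacle---but you give no argument for it. Such a separation is not needed. The paper keeps the auxiliaries inside the $f'$'s: since the passage from $\{x_{h_0},Y_{h_0}\}$ to $\{f'_{h_0}\}$ and its inverse are both given by monic Laurent monomials, it is an isomorphism of Laurent polynomial rings, so by Theorem~\ref{thm:laurent} and its corollary (with Lemma~\ref{lemma-common-ufd}) every $f'_h$ is irreducible and the $f'_h$ are pairwise coprime as Laurent polynomials in $\{x_{h_0},Y_{h_0}\}$; the coprimeness of $x_{t,n}$ and $x_{t',n'}$ as rational functions of the $x$-initials then follows, exactly as in Theorem~\ref{ex-dKdV_coprime}, from the disjointness of their four-point stencils together with the fact that the $x$'s are independent of the $Y$'s. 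If you insist on auxiliary-free $f'$'s, you must actually prove the gauge-propagation statement (that the constraint $\lambda_{t,n}\lambda_{t-1,n-1}=\lambda^k_{t-1,n}\lambda^k_{t,n-1}$ propagates monomially and that \eqref{eq:laurent} is covariant under it); as written, your proof delegates its crucial step to an unverified expectation, while the correct domain extension and the ring-isomorphism argument make that step unnecessary.
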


\begin{proof}
\setcounter{step}{0}
\begin{step}
Let $H \subset \mathbb{Z}^2$ be the set obtained by translating $G$ in the $(-1, -1)$-direction, i.e.
\[
	H = G - (1, 1).
\]
It is clear by Definitions~\ref{def:gooddomain} and \ref{def:nonlineargooddomain} that $H$ is a good domain with respect to Equation \eqref{eq:laurent} and its initial domain is
\[
	H_0 = G_0 \cup \left( G_0 - (1, 1) \right).
\]
Therefore, if $(t_0, n_0) \in G_0$, then the 4 points
\[
	(t_0, n_0), \quad
	(t_0 - 1, n_0), \quad
	(t_0, n_0 - 1), \quad
	(t_0 - 1, n_0 - 1)
\]
all belong to $H_0$.
Let us consider Equation \eqref{eq:laurent} on $H$.
\end{step}

\begin{step}
Let us take appropriate initial values for $f_{h_0}$ ($h_0 \in H_0$) so that
\[
	x_{t_0,n_0} = \frac{f_{t_0,n_0}f_{t_0-1,n_0-1}}{f^k_{t_0-1,n_0}f^k_{t_0,n_0-1}}
\]
for each $(t_0, n_0) \in G_0$.

Let us introduce auxiliary variables $Y_{h_0}$ for each $h_0 \in H_0 \setminus G_0$.
We consider the system of equations with respect to the (infinitely many) variables $f'_{t,n}$ ($(t,n) \in H_0$)
\begin{equation}\label{eq:ftox}
	f'_{t,n} = \begin{cases}
		Y_{h} & (h \in H_0 \setminus G_0) \\
		\dfrac{x_{t,n}f'^k_{t-1,n}f'^k_{t,n-1}}{f'_{t-1,n-1}} & (h \in G_0).
	\end{cases}
\end{equation}
Since the indices of $f'$ on the right hand side are all smaller than $(t, n)$ with respect to the product order $\le$, we can solve this system and we obtain that each $f_{h_0}$ is a monic Laurent monomial of $x_{h_0}$ ($h_0 \in G_0$) and $Y_{h_0}$ ($h_0 \in H_0 \setminus G_0$).
Let us define $f'_h$ for $h \in H \setminus H_0$ by Equation \eqref{eq:laurent}, i.e.\
\[
	f'_h := \left. f_h \right|_{f_{h_0} \leftarrow f'_{h_0} (h_0 \in H_0)}
	\quad
	(h \in H \setminus H_0),
\]
where each $f_h$ is a Laurent polynomial of $f_{h_0}$ ($h_0 \in H_0$) and $\left\{ f'_{h_0} \right\}_{h_0 \in H_0}$ is the solution of the system \eqref{eq:ftox}.
Then, it is clear by construction that $x_{t,n}$ satisfies Equation \eqref{eq:nonlinear} and
\[
	x_{t,n} = \frac{f'_{t,n} f'_{t-1,n-1}}{f'^k_{t-1,n} f'^k_{t,n-1}}
\]
for all $(t, n) \in G$.
Note that $x_h$ is independent of $Y_{h_0}$ while each $f'$ on the right hand is not.
\end{step}

\begin{step}
Let us show that $f'_h$ is irreducible as a Laurent polynomial of $x_{h_0}, Y_{h_0}$.

The system \eqref{eq:ftox} can be thought of as defining a variable transformation from $\{x_{h_0}, Y_{h_0}\}$ to $\{f'_{h_0}\}$.
It follows from the previous step that this transformation is defined by Laurent monomials.
Its inverse transformation is also defined by monic Laurent monomials as
\[
	\begin{cases}
		Y_{t,n} = f'_{t,n} & ((t,n) \in H_0 \setminus G_0) \\
		x_{t,n} = \dfrac{f'_{t,n} f'_{t-1,n-1}}{f'^k_{t-1,n} f'^k_{t,n-1}} & ((t,n) \in G_0).
	\end{cases}
\]
Therefore, the variable transformation from $\{x_{h_0}, Y_{h_0}\}$ to $\{f'_{h_0}\}$ is given by a ring isomorphism and thus it preserves the irreducibility of each element.
Since every iterate $f'_h$ is irreducible as a Laurent polynomial of $f'_{h_0}$ by Theorem~\ref{thm:laurent}, it is also irreducible as a Laurent polynomial of $x_{h_0}, Y_{h_0}$.
\end{step}
\end{proof}

\end{document}